\documentclass[letterpaper, 10 pt, conference]{ieeeconf}

     \IEEEoverridecommandlockouts 
    \makeatletter

    \let\proof\@undefined
    \let\endproof\@undefined
    \makeatother
    \let\labelindent\relax

    \usepackage{jhoagg}
    \usepackage{mathtools}
    \usepackage{amsfonts}
    \usepackage{amssymb,latexsym}
    \usepackage[psamsfonts]{eucal}
    \usepackage{amsthm}
    \usepackage{graphicx}
    \usepackage[inline]{enumitem}
    \usepackage{indentfirst}
    \usepackage{setspace}
    \usepackage{microtype}
    \usepackage{threeparttable}
    \usepackage{float}
    \usepackage{capt-of}
    \usepackage{hyperref}
    \usepackage{cleveref}
    \usepackage{afterpage}
    \usepackage{placeins}
    \usepackage{cancel}
    \usepackage{cite}
    \usepackage{leftidx}
    \usepackage{breqn}
    \usepackage[export]{adjustbox}
    \usepackage{comment}
    \usepackage[linesnumbered,ruled,vlined]{algorithm2e}
    \usepackage[dvipsnames]{xcolor}
    \usepackage{soul}
    \usepackage{siunitx}
    \usepackage{arydshln}

    \usepackage{tikz}

    \let\originalleft\left
    \let\originalright\right
    \renewcommand{\left}{\mathopen{}\mathclose\bgroup\originalleft}
    \renewcommand{\right}{\aftergroup\egroup\originalright}

    \AtBeginDocument{%
      \setlength{\abovedisplayskip}{3.0pt plus 2pt minus 3pt}
      \setlength{\belowdisplayskip}{3.0pt plus 2pt minus 3pt}
      \setlength{\abovedisplayshortskip}{0pt plus 2pt}
      \setlength{\belowdisplayshortskip}{2.0pt plus 2pt minus 2pt}
    }

    \makeatletter
    \def\thm@space@setup{%
      \thm@preskip=3pt plus 1pt minus 1pt
      \thm@postskip=\thm@preskip
    }
    \makeatother
    
    \usepackage{etoolbox}
    \makeatletter
    \expandafter\patchcmd\csname\string\proof\endcsname{\topsep6\p@\@plus6\p@}{\topsep3\p@\@plus3\p@}{}{\PackageWarning{patch}{proof spacing patch FAILED}}
    \makeatother

    \makeatletter
    \patchcmd{\thebibliography}%
      {\vskip 0.3\baselineskip plus 0.1\baselineskip minus 0.1\baselineskip}%
      {\vskip 0pt}{}{\PackageWarning{patch}{bib vskip patch FAILED}}
    \patchcmd{\thebibliography}%
      {\itemsep 0pt plus .5pt\relax}%
      {\itemsep 0pt plus .5pt\relax \topsep 0pt\relax \parskip 0pt\relax}%
      {}{\PackageWarning{patch}{bib topsep patch FAILED}}
    \makeatother

    \newcounter{thm}
    \newtheorem{theorem}[thm]{\indent Theorem}
    \newtheorem{assumption}{\indent Assumption}
    \newtheorem{proposition}{\indent Proposition}
    \newtheorem{lemma}{\indent Lemma}
    \newtheorem{corollary}{\indent Corollary}
    \newtheorem{definition}{\indent Definition}
    \newtheorem{example}{\indent Example}

    \newtheorem{problem}{\indent Problem}
    \allowdisplaybreaks

    \renewcommand{\theenumi}{{\it (\alph{enumi})}}
    \renewcommand{\labelenumi}{\theenumi}

    		\newcommand\xqed[1]{%
      \leavevmode\unskip\penalty9999 \hbox{}\nobreak\hfill
      \quad\hbox{#1}}
    \newcommand\exampletriangle{\xqed{$\triangle$}}

    \usepackage{accents}

    \newlength\figureheight
    \newlength\figurewidth

    \allowdisplaybreaks

    \graphicspath{ {Figures/} }
    \DeclareGraphicsExtensions{.pdf,.png,.jpg}

    \DeclareMathAlphabet{\mathcal}{OMS}{cmsy}{m}{n} 

    \crefname{equation}{}{}

    \SetKwInput{KwInit}{Initialize}
    \SetKwFunction{SafeSet}{SafeSet}
    \SetKwFunction{GetSafe}{GetSafe}
    \SetKwFunction{GetUnsafe}{GetUnsafe}
    
    \SetCommentSty{mycommfont}

\begin{document}

\title{Predicted-Flow Control Barrier Functions for Non-Control-Affine Systems}

\author{Amirsaeid Safari and Jesse B. Hoagg\vspace{-3ex}
\thanks{A. Safari and J. B. Hoagg are with the Department of Mechanical and Aerospace Engineering, University of Kentucky, Lexington, KY, USA. (e-mail: amirsaeid.safari@uky.edu, jesse.hoagg@uky.edu).}
\thanks{Supported in part by NSF (2450718) and USDA (2024-69014-42393).}
}
\maketitle

\begin{abstract}
Control barrier functions (CBFs) enforce safety through conditions imposed pointwise in time without consideration of state evolution over a future horizon.
Thus, CBF-based controls are typically myopic. 
Predicted-flow CBFs (P-CBFs) generalize CBFs from a function of the current state to a functional of the predicted flow under a parametrized control.
P-CBFs can certify safety over the entire prediction horizon while simultaneously allowing for performance optimization over the horizon. 
However, prior work with P-CBFs only applies to control-affine systems, and suffers from a limitation where the prediction horizon can shrink or even vanish.
This article addresses both of these shortcomings by introducing a planning control that smoothly transitions from a parametric plan to a backup control (if needed) and a time-shift parameter that determines if transition is needed. 
The evolution of the control-plan and time-shift parameters is determined from a single guaranteed-feasible convex optimization, which reduces to a quadratic program (QP) if the control limits are a convex polyhedron. 
The real-time executed control is determined from the instantaneous planning control and time shift.
This method simultaneously addresses safety certification and performance optimization over the fixed prediction horizon. 
The approach is compared to nonlinear model predictive control in simulation of an autonomous car navigating a dense obstacle environment.
\end{abstract}

\vspace{-10pt}

\section{Introduction}

Control barrier functions (CBFs) can be used to synthesize controls that enforce forward invariance inside a safe set $\SC_\rms$ \cite{blanchini1999,wieland2007}.
For a control-affine system, a quadratic program (QP) can be used to compute controls that achieve forward invariance \cite{ames2016control}.
However, synthesizing a valid CBF on a large subset of $\SC_\rms$ is challenging due to control input constraints \cite{choi2021hj}; existing synthesis methods are typically limited to low-dimensional systems or restricted dynamics (e.g., \cite{choi2021hj,wang2018}).
Furthermore, pointwise CBF-based optimization is myopic because safety is enforced at each time instant based on the current state without consideration of how the trajectory will evolve over a future time horizon \cite{breeden2022,agrawal2025gatekeeper}.

Backup-CBF methods address the validity challenge by relying on a backup controller $u_\rmb$ that makes a backup set $\SC_\rmb \subset \SC_\rms$ forward invariant \cite{gurriet2020,backupautomatic}.
Safety is enforced through an implicit control-forward-invariant subset of $\SC_\rms$ where the state prediction under $u_\rmb$ stays in $\SC_\rms$ and reaches $\SC_\rmb$ in finite time. 
This implicit control-forward-invariant set is often small relative to $\SC_\rms$.
Conservatism can be reduced by delaying the switch to $u_\rmb$ 
\cite{agrawal2025gatekeeper,singletary2022safe}.
Nevertheless, backup-CBF methods use the prediction horizon only for safety and not for performance optimization.
Thus, the resulting controls are myopic in the sense that they do not optimize cost over a prediction horizon.

The myopic nature of pointwise CBF optimization can be addressed with layered architectures that combine a planner with a CBF safety filter \cite{grandia2021layered,sforni2024receding}; however, CBF validity with input constraints is not addressed and multiple optimizations are required. 
Other methods encode safety and performance over a horizon in a single optimization \cite{breeden2022,vahs2024}.
However, validity/feasibility with input constraints is not guaranteed.

Predicted-flow CBFs (P-CBFs) can be used to address non-myopic performance and safety with guaranteed feasibility \cite{safari2026flowbarrier}.
P-CBFs generalize CBFs from a function of the current state to a functional of the predicted flow under a control plan, which is parametrized by a finite-dimensional variable over a prediction horizon.
The method in \cite{safari2026flowbarrier} uses a terminal constraint to ensure that the predicted flow ends in a known control-forward-invariant subset of $\SC_\rms$, which is analogous to the terminal constraint in model predictive control \cite{mayne2000}.
A time-shift parameter modulates the prediction horizon to ensure feasibility.
One drawback of \cite{safari2026flowbarrier} is that the prediction horizon can shrink or even vanish if needed for feasibility. 
Thus, \cite{safari2026flowbarrier} uses a backup control as the final failsafe, and a separate mechanism is used to recover a nonempty prediction horizon. 
Also, \cite{safari2026flowbarrier} is restricted to control-affine systems.

This article advances P-CBF methods by overcoming the shrinking-prediction-horizon limitation of \cite{safari2026flowbarrier} and addressing systems that are not necessarily affine in the control. 
These advances are achieved by introducing a planning control that smoothly transitions from a parametric plan to a backup controller only if needed.
Time-shift parameter determines if transition is needed; however, the prediction horizon remains fixed. 
The rates of change of the control-plan and time-shift parameters are determined from a single guaranteed-feasible convex optimization, which reduces to a QP if the control constraints are a convex polyhedron.
The real-time executed control is determined from the instantaneous planning control and time shift.
This work proves the candidate P-CBFs are valid, and thus simultaneously addresses safety certification and performance optimization over a fixed prediction horizon. 


\section{Preliminaries}\label{sec:background}

Let $\SD \subseteq \BBR^n$, and let $\mu \colon \SD \to \BBR$ be continuous.
The \textit{radial cone} $R_\SD \colon \SD \rightrightarrows \BBR^n$ is defined by
\begin{align*}
R_\SD(x) &\triangleq \{ \nu \in \BBR^n \colon \exists \,  \varepsilon > 0 \text{ s.t. } \forall\, s \in (0,\varepsilon),\, x + s\nu \in \SD \}.
\end{align*}
Note that if $x \in \operatorname{int}\SD$, then $R_\SD(x) = \BBR^n$.
The function $\mu$ is \textit{right-side directionally differentiable on $\SD$} if for all $(x,\nu) \in \SD \times R_\SD(x)$,
\begin{equation*}
D_{\nu}  \mu(x) \triangleq \lim_{s \downarrow 0} \frac{\mu(x + s \nu) - \mu(x)}{s}
\end{equation*}
exists.
If $\mu$ is differentiable on $\SD$, then $D_{\nu} \mu(x) = L_{\nu} \mu(x)$, where $L_\nu \mu(x) \triangleq \mu^\prime(x) \nu$ is the Lie derivative of $\mu$ along $\nu$.
For brevity, we omit ``right-side'' for the remainder of this article.
The next result concerns the time derivative of $\mu(y(t))$ from the right side. See \cite[Lemma~1]{safari2026flowbarrier} for a proof.

\begin{lemma}\rm\label{lem:dini_directional}
Assume $\mu$ is locally Lipschitz and directionally differentiable on $\SD$.
Let $y \colon [0,\infty) \to \SD$ be differentiable such that for all $t \geq 0$, $\dot{y}(t) \in R_\SD(y(t))$.
Then, for all $t \geq 0$,
\begin{equation*}
\frac{\rmd^+}{\rmd t} \mu(y(t)) \triangleq \lim_{s \downarrow 0} \frac{\mu(y(t+s)) - \mu(y(t))}{s} = D_{\dot{y}(t)} \mu(y(t))
\end{equation*}
exists. 
\end{lemma}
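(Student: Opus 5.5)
Fix $t \ge 0$ and set $x \triangleq y(t)$ and $\nu \triangleq \dot y(t)$. By hypothesis $\nu \in R_\SD(x)$, so directional differentiability of $\mu$ on $\SD$ guarantees that $D_\nu \mu(x)$ exists. The plan is to compare the difference quotient along the curve with the difference quotient along the ray $x + s\nu$. The ray stays in $\SD$ for small $s$ because $\nu$ is in the radial cone. The local Lipschitz property then bounds the gap between the two quotients, and differentiability of $y$ makes that gap vanish.

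For $s>0$ small, I split
\begin{equation*}
\frac{\mu(y(t+s)) - \mu(x)}{s} = \frac{\mu(x+s\nu) - \mu(x)}{s} + \frac{\mu(y(t+s)) - \mu(x+s\nu)}{s}.
\end{equation*}
The first term converges to $D_\nu\mu(x)$ as $s \downarrow 0$. This is exactly the definition, valid because $x+s\nu \in \SD$ for $s\in(0,\varepsilon)$ with $\varepsilon$ from the definition of $R_\SD(x)$.

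For the second term, I use local Lipschitzness at $x$. There exist a neighborhood $\SU$ of $x$ and a constant $L>0$ such that $|\mu(a)-\mu(b)| \le L\|a-b\|$ for all $a,b \in \SU\cap\SD$. For $s$ small enough, both $y(t+s)$ and $x+s\nu$ lie in $\SU \cap \SD$. Indeed, $y(t+s)\in\SD$ by assumption and $y(t+s)\to x$ by continuity of $y$, while $x+s\nu\in\SD$ by the radial-cone property. Hence
\begin{equation*}
\left| \frac{\mu(y(t+s)) - \mu(x+s\nu)}{s} \right| \le L \left\| \frac{y(t+s)-y(t)}{s} - \dot y(t) \right\|,
\end{equation*}
and the right-hand side tends to $0$ by differentiability of $y$ at $t$. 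Combining the two terms gives existence of the right derivative and equality with $D_{\dot y(t)}\mu(y(t))$.

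The main point requiring care is the Lipschitz step. Local Lipschitzness on $\SD$ applies only to pairs of points in $\SD$, so both comparison points must be shown to lie in $\SD$ near $x$. This is precisely why the hypothesis $\dot y(t)\in R_\SD(y(t))$ is needed: without it the auxiliary point $x+s\nu$ might leave $\SD$, where $\mu$ is undefined. Once that membership is secured, as above, the rest is routine.
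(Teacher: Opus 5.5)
Your argument is correct, and it is the standard proof of this fact. You split the curve quotient into the ray quotient, which converges to $D_{\dot y(t)}\mu(y(t))$ because $\dot y(t)\in R_\SD(y(t))$, plus a remainder bounded by $L\|(y(t+s)-y(t))/s-\dot y(t)\|$ using local Lipschitzness on $\SD$, and you check that both comparison points lie in $\SD$ near $y(t)$. The paper gives no proof here and defers to \cite[Lemma~1]{safari2026flowbarrier}, so there is no in-paper argument to compare against.
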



\section{Predicted-Flow Control Barrier Functions}\label{sec:pcbf}

Consider the dynamics
\begin{equation}\label{eq:system}
\dot x(t) = f(x(t), u(t)),
\end{equation}
where $f\colon \BBR^n \times \SU \to \BBR^n$ is continuously differentiable,
$x(t) \in \BBR^n$ is the state,
$x(0) = x_0 \in \BBR^n$ is the initial condition, and
$u(t) \in \SU \subseteq \BBR^m$ is the control.
The control $u$ is \textit{admissible} if for all $t\ge0$, $u(t) \in \SU$. 
We assume \eqref{eq:system} is forward complete under all bounded and continuous $u$.

Consider the \textit{planning control} $\pi(\cdot, \cdot\,; \theta) \colon [0, \infty) \times \BBR^n \to \BBR^m$, which is parameterized by $\theta \in \BBR^d$.
The planning control $\pi$ is continuous on $[0,\infty) \times \BBR^n \times \BBR^d$, and for all $t \ge 0$, $\pi(t,\cdot\,;\cdot)$ is continuously differentiable on $\BBR^n \times \BBR^d$.

Next, let $k \colon \BBR^d \to \BBR$ be continuously differentiable, and define the \textit{admissible parameter set}
\begin{equation*} 
\Theta \triangleq \{ \theta \in \BBR^d \colon k(\theta) \ge 0 \},
\end{equation*}
where $\Theta$ is such that for all $(t, x,\theta) \in [0,\infty) \times \BBR^n \times \Theta$, $\pi(t, x; \theta) \in \SU$.
In other words, $\theta \in \Theta$ is a parameterization of planning controls that are admissible. 
The next example constructs a time-varying planning control that is independent of the state. 
Later in this article, we present and use a time-and-state-dependent planning control.

\begin{example}\rm\label{ex:foh}
Let $\beta_1,\ldots,\beta_p \colon [0,\infty) \to [0,\infty)$ be continuous such that for all $t \ge 0$, $\sum_{i=1}^{p} \beta_i(t) = 1$.
Degree-one B-splines are one choice for $\beta_1,\ldots,\beta_p$; see Figure~\ref{fig:basis_functions}.
Define $\kappa(t;\theta) = \sum_{i=1}^{p} \theta_i \, \beta_i(t)$, where $\theta_i \in \BBR^m$, $\theta = [\theta_1^\top \;\theta_2^\top \;\ldots \;\theta_p^\top]^\top \in \BBR^{d}$, and $d = pm$.

Let $\pi = \kappa$.
Since $\pi$ is a convex combination of $\theta_1,\ldots,\theta_p$, selecting $k$ such that $\Theta \subseteq \SU^p$ implies that for all $(t,\theta) \in [0,\infty) \times \Theta$, $\pi(t;\theta) \in \SU$.
Section~\ref{sec:qp} provides a simple construction of $k$ if $\SU$ is a convex polyhedron. 
\exampletriangle
\end{example}

Next, let $\gamma \geq 0$ be the \textit{time shift}, and let the \textit{predicted flow} $\phi(\cdot; x, \theta, \gamma) \colon [\gamma, \infty) \to \BBR^n$ satisfy
\begin{equation}\label{eq:flow_def}
\phi(\tau; x, \theta, \gamma) \!=\! x + \! \int_{\gamma}^{\tau} \!\! f(\phi(\sigma; x, \theta, \gamma), \pi(\sigma, \phi(\sigma; x, \theta, \gamma); \theta)) \, \rmd\sigma.
\end{equation}
Differentiating \eqref{eq:flow_def} with respect to $\tau$ yields
\begin{equation}\label{eq:flow_velocity}
\frac{\partial \phi}{\partial \tau}(\tau; x, \theta, \gamma) = f(\phi(\tau; x, \theta, \gamma),\, \pi(\tau, \phi(\tau; x, \theta, \gamma); \theta)),
\end{equation}
which is the evolution of $\phi$  given $(x,\theta,\gamma)$ from initial condition $\phi(\gamma; x, \theta, \gamma) \!=\! x$. 
Hence, the predicted flow $\phi(\tau;x,\theta,\gamma)$ is the solution to \eqref{eq:system} at $t = \tau-\gamma$ with $x_0 = \phi(\gamma; x, \theta,\gamma)$ and $u(t) = \pi(t+\gamma, \phi(t+\gamma; x, \theta, \gamma); \theta)$.
Note that $\gamma$ is a shift in the temporal argument of $\pi$ used to obtain the predicted flow.
For all real time $t \ge 0$, it follows that $\phi(\tau; x(t), \theta, \gamma)$ is a trajectory prediction of \eqref{eq:system} from its current value $x(t)$ under the control $\pi(\tau, \phi(\tau; x(t), \theta, \gamma); \theta)$ at prediction time $\tau-\gamma$.

To influence the time evolution of $\theta$ and $\gamma$, we let $\theta : [0,\infty) \to \BBR^d$ and $\gamma \colon [0,\infty) \to \BBR$ be the solutions to
\begin{equation}\label{eq:theta_gamma_dyn}
\dot{\theta}(t) = \omega(t), \qquad \dot{\gamma}(t) = z(t),
\end{equation}
where $\theta(0) = \theta_0 \in \BBR^d$, $\gamma(0) = \gamma_0 \geq 0$, $\omega : [0,\infty) \to \Omega \subseteq \BBR^d$ and $z \colon [0,\infty) \to \SZ \subseteq \BBR$ are control inputs.

\begin{figure}[t]
\centering
\includegraphics[width=1.0\columnwidth, clip=true, trim=0.0in 0.15in 0.0in 0.0in]{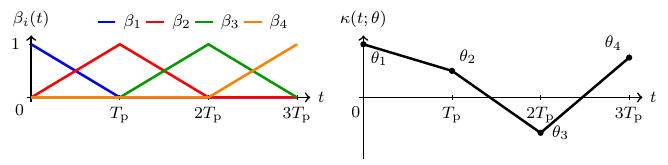}
\caption{Piecewise-linear $\kappa$ constructed from degree-one B-splines $\beta_1,\!\ldots,\!\beta_p$.}
\label{fig:basis_functions}
\end{figure}

Next, let $u$ be given by 
\begin{equation}\label{eq:u_follows_plan}
u(t) = \pi(\gamma(t), x(t); \theta(t)).
\end{equation}
To understand \eqref{eq:u_follows_plan}, consider controls $\omega(t) \equiv 0$ and $z(t) \equiv 1$, where $\gamma_0 = 0$. 
In this case, $\theta(t) = \theta_0$, $\gamma(t) = t$, and $u(t) = \pi(t, x(t); \theta_0)$, which implies that the state trajectory equals the predicted flow, that is, $x(t) = \phi(t; x_0, \theta_0, 0)$.
In general, $\omega$ modifies the parameter $\theta$ and $z$ modifies the temporal argument of the planning control used for execution in \eqref{eq:u_follows_plan}.
We note that $z(t)>0$ shifts the executed control \eqref{eq:u_follows_plan} later in the plan, whereas $z(t)<0$ shifts it earlier in the plan.

It follows that \cref{eq:system,eq:theta_gamma_dyn,eq:u_follows_plan} can be written as
\begin{equation}\label{eq:augmented}
\dot{\bar{x}}(t) = \bar{f}(\bar{x}(t)) + \bar{g} \, v(t),
\end{equation}
where $\bar n \triangleq n+d+1$, and 
 \begin{gather}
\bar x \triangleq [x^\top \; \theta^\top \; \gamma]^\top, \,\,
\bar x_0 \triangleq [x_0^\top \; \theta_0^\top \; \gamma_0]^\top, \,\,
v \triangleq [\omega^\top \;\; z]^\top, \label{eq:augmented.2}\\
                 \bar{f}(\bar{x}) \triangleq \begin{bmatrix}
                       f(x,\pi(\gamma, x;\theta))\\
                       0_{(d+1) \times 1}
                 \end{bmatrix},\quad
\bar{g} \triangleq \begin{bmatrix}
                   0_{n \times (d+1)} \\
                   I_{d+1}
           \end{bmatrix}. \label{eq:augmented.3}
\end{gather}
Hence, \cref{eq:system} combined with \cref{eq:theta_gamma_dyn,eq:u_follows_plan} is affine in the new control $v$, which determines the rate of change of $\theta$ and $\gamma$.

Let $T > 0$ be the prediction horizon, and for all $i \in \{1,\ldots,\ell\}$, let $H_i \colon C([\gamma, \gamma+T], \BBR^n) \to \BBR$ be a functional such that
\begin{equation}\label{eq:psi_functional}
\psi_i(\bar{x}) \triangleq H_i[\phi(\cdot; \bar{x})]
\end{equation}
is locally Lipschitz and directionally differentiable on
\begin{equation*} 
\SH \triangleq \{\bar{x} \in \BBR^{\bar n} \colon  k(\theta) \geq 0, \gamma \geq 0, \psi_1(\bar{x}) \geq 0, \ldots, \psi_\ell(\bar{x}) \geq 0 \}.
\end{equation*}
Note that $\SH$ is the set of $\bar x$ such that $\theta \in \Theta$, $\gamma \geq 0$, and each functional of the predicted flow $\phi$ is nonnegative.
The prediction window $[\gamma, \gamma + T]$ has fixed length $T$ for all $\gamma \geq 0$.
Next, we introduce the concept of a predicted-flow CBF.

\begin{definition}\label{def:pcbf}
\rm
Assume $\psi_1,\!\ldots,\!\psi_\ell$ given by \eqref{eq:psi_functional} are locally Lipschitz and directionally differentiable on $\SH$.
Then, $(\psi_1,\!\ldots,\!\psi_\ell)$ is a \textit{predicted-flow control barrier function (P-CBF) \!$\ell$-tuple for \eqref{eq:augmented} on $\SH$ given $\pi$ and $k$} if there exist extended class-$\SK$ functions $\alpha_1,\!\ldots,\!\alpha_\ell,\!\beta_\theta,\!\beta_\gamma$ such that for all $\bar{x} \!\in\! \SH$, $K_\SH(\bar{x})$ is nonempty, where $K_\SH \colon \SH \!\rightrightarrows\! \Omega \!\times\! \SZ$ is defined by
\begin{align}\label{eq:pfcbf_K}
K_\SH(\bar{x}) \!\triangleq\! \{&\hat{v} \!\in\! \Omega \!\times\! \SZ \colon k^\prime(\theta)\hat{\omega} \!+\! \beta_\theta(k(\theta)) \!\geq\! 0, \; \hat{z} \!+\! \beta_\gamma(\gamma) \!\geq\! 0, \nn\\
& D_{\bar{f}(\bar{x}) + \bar{g}\hat{v}} \psi_1(\bar{x}) + \alpha_1(\psi_1(\bar{x})) \geq 0, \ldots, \nn\\
& D_{\bar{f}(\bar{x}) + \bar{g}\hat{v}} \psi_\ell(\bar{x}) + \alpha_\ell(\psi_\ell(\bar{x})) \geq 0 \}.
\end{align}
\end{definition}

The control $z$ modifies the temporal argument $\gamma$ in the executed control \eqref{eq:u_follows_plan}, which provides an additional degree of freedom that can help ensure \eqref{eq:pfcbf_K} is nonempty.
Section~\ref{sec:sofc} takes advantage of this to construct a valid P-CBF pair.

The next result shows that if $(\psi_1,\ldots,\psi_\ell)$ is a P-CBF $\ell$-tuple on $\SH$, then any control selected pointwise from $K_\SH(\bar{x})$ makes $\SH$ forward invariant.

\begin{theorem}\rm\label{thm:pcbf_invariance}
Assume $(\psi_1,\ldots,\psi_\ell)$ is a P-CBF $\ell$-tuple for \eqref{eq:augmented} on $\SH$.
Let $v_{\rm fi} \colon \SH \to \Omega \times \SZ$ be such that for all $\bar{x} \in \SH$, $v_{\rm fi}(\bar{x}) \in K_\SH(\bar{x})$.
Let $\bar{x}_0 \in \SH$, and assume there exists $t_\rmm \in (0, \infty]$ such that \eqref{eq:augmented} with $v = v_{\rm fi}$ has a unique solution on $[0, t_\rmm)$.
Then, for all $t \in [0, t_\rmm)$, $\bar{x}(t) \in \SH$.
\end{theorem}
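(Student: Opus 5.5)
The plan is to argue by contradiction with a first-exit-time argument, using Lemma~\ref{lem:dini_directional} to convert the pointwise conditions in \eqref{eq:pfcbf_K} into differential inequalities along the closed-loop trajectory. Each of these is then closed with a comparison lemma for the right Dini derivative.

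Fix the unique solution $\bar x$ of \eqref{eq:augmented} with $v = v_{\rm fi}$ on $[0,t_\rmm)$. Suppose there is $t_1 \in [0,t_\rmm)$ with $\bar x(t_1) \notin \SH$, and let $t^* \triangleq \sup\{ t \in [0,t_1] \colon \bar x(s) \in \SH \text{ for all } s \in [0,t]\}$. The barrier functions $h_0(\bar x) \triangleq k(\theta)$, $h_{\gamma}(\bar x) \triangleq \gamma$ and $\psi_1,\ldots,\psi_\ell$ are continuous, and $\SH$ is their common superlevel set, so $\SH$ is closed. Hence $\bar x(t^*) \in \SH$, and there are times arbitrarily close to $t^*$ from the right at which some barrier function is negative.

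On $[0,t^*]$ the trajectory lies in $\SH$, so $v(t) = v_{\rm fi}(\bar x(t)) \in K_\SH(\bar x(t))$. For $k$ and $\gamma$, which are $C^1$, this gives $\tfrac{\rmd}{\rmd t} k(\theta(t)) \ge -\beta_\theta(k(\theta(t)))$ and $\dot\gamma(t) \ge -\beta_\gamma(\gamma(t))$. For each $\psi_i$, Lemma~\ref{lem:dini_directional} gives $\tfrac{\rmd^+}{\rmd t}\psi_i(\bar x(t)) = D_{\dot{\bar x}(t)} \psi_i(\bar x(t)) \ge -\alpha_i(\psi_i(\bar x(t)))$, with $\dot{\bar x}(t) = \bar f(\bar x(t)) + \bar g v(t)$. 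Applying that lemma requires $\dot{\bar x}(t) \in R_\SH(\bar x(t))$. I would avoid this by working with $\psi_i$ extended as a locally Lipschitz, directionally differentiable function on a neighborhood of the trajectory, since $\psi_i$ is defined via the flow for every $\bar x$. Alternatively, I would restrict to intervals where the trajectory stays in $\SH$.

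The main obstacle is extending these inequalities past $t^*$, because the condition $v \in K_\SH$ is only guaranteed inside $\SH$. The standard route follows the Nagumo/Blanchini argument.

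\begin{itemize}
\item Take $\SH$ as the closed domain of the trajectory up to $t^*$.
\item At $t^*$, each barrier function $h$ with $h(\bar x(t^*)) = 0$ satisfies $\tfrac{\rmd^+}{\rmd t} h \ge -\alpha(0) = 0$, while barrier functions with $h(\bar x(t^*)) > 0$ remain positive on a neighborhood by continuity.
\item A nonnegative right Dini derivative alone does not exclude immediate exit. I would therefore invoke the comparison lemma for the scalar inequality $\dot y \ge -\alpha(y)$, with $y(t^*) \ge 0$ and $\alpha$ extended class-$\SK$. This holds under the locally Lipschitz assumptions on $\alpha$ used in the CBF literature; I would assume it or take it from \cite{safari2026flowbarrier}.
\item The comparison solution satisfies $y \ge 0$ for all time, since $y \equiv 0$ is an equilibrium.
\end{itemize}

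To make this rigorous without $v_{\rm fi}$ being defined outside $\SH$, I would argue on $[0,t_\rmm)$ directly. The trajectory is assumed to exist and be unique for the given $v_{\rm fi}$, which implicitly keeps $\bar x(t)$ in the domain $\SH$ of $v_{\rm fi}$. The contradiction then comes from showing that no barrier function can become negative: on the maximal interval where all barrier functions are nonnegative, the comparison inequality forces each of them to stay at or above the solution of $\dot y = -\alpha(y)$, and that solution is nonnegative. I expect the delicate part of the write-up to be handling uniqueness and domain issues cleanly.
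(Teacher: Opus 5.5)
Your argument is the paper's proof once the first-exit scaffolding is removed: the conditions in $K_\SH$ together with Lemma~\ref{lem:dini_directional} give $\frac{\rmd^+}{\rmd t}\psi_i(\bar x(t))\ge-\alpha_i(\psi_i(\bar x(t)))$, and analogous inequalities for $k(\theta(t))$ and $\gamma(t)$, on all of $[0,t_\rmm)$, and the comparison lemma against $\dot\eta=-\alpha(\eta)$ with nonnegative initial data then gives the result, relying (as the paper does tacitly) on $v_{\rm fi}(\bar x(t))\in K_\SH(\bar x(t))$ along the whole solution because $v_{\rm fi}$ is only defined on $\SH$. Your Nagumo-style bullets for continuing past $t^*$ are circular on their own, since the differential inequality after $t^*$ already presupposes $\bar x(t)\in\SH$ there, so drop them and argue directly on $[0,t_\rmm)$ as your last paragraph does.
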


\begin{proof}
For $i \in \{1,\ldots,\ell\}$, let $\eta_i \colon [0,\infty) \to \BBR$ satisfy $\dot{\eta}_i = -\alpha_i(\eta_i)$, where $\eta_i(0) = \psi_i(\bar{x}_0)$.
Since $\psi_i(\bar{x}_0) \ge 0$ and $\alpha_i$ is extended class-$\SK$, it follows that for all $t \geq 0$, $\eta_i(t) \geq 0$.
Similarly, let $\eta_\theta,\eta_\gamma \colon [0,\infty) \to \BBR$ satisfy $\dot{\eta}_\theta = -\beta_\theta(\eta_\theta)$ and $\dot{\eta}_\gamma = -\beta_\gamma(\eta_\gamma)$, where $\eta_\theta(0) = k(\theta_0)$ and $\eta_\gamma(0) = \gamma_0$.
Since $k(\theta_0) \ge 0$, $\gamma_0 \ge 0$, and $\beta_\theta$ and $\beta_\gamma$ are extended class-$\SK$, it follows that for all $t \geq 0$, $\eta_\theta(t) \geq 0$ and $\eta_\gamma(t) \geq 0$.

Since $(\psi_1,\ldots,\psi_\ell)$ is a P-CBF $\ell$-tuple on $\SH$, Definition~\ref{def:pcbf} implies that for all $\bar{x} \in \SH$, $K_\SH(\bar{x})$ is nonempty.
Thus, Lemma~\ref{lem:dini_directional} and \eqref{eq:pfcbf_K} imply that for all $t \in [0, t_\rmm)$,
\begin{gather*}
\frac{\rmd}{\rmd t} k(\theta(t)) \geq -\beta_\theta(k(\theta(t))), \quad
\frac{\rmd}{\rmd t} \gamma(t) \geq -\beta_\gamma(\gamma(t)), \\
\frac{\rmd^+}{\rmd t} \psi_i(\bar{x}(t)) \geq -\alpha_i(\psi_i(\bar{x}(t))), \quad \text { for } i \in \{1,\ldots,\ell\}.
\end{gather*}
Thus, \cite[Lemma~3.4]{khalil2002nonlinear} implies that for all $t \in [0, t_\rmm)$, $\psi_i(\bar{x}(t)) \geq \eta_i(t) \geq 0$, $k(\theta(t)) \geq \eta_\theta(t) \geq 0$, and $\gamma(t) \geq \eta_\gamma(t) \geq 0$.
Thus, for all $t \in [0, t_\rmm)$, $\bar{x}(t) \in \SH$.
\end{proof}

\section{Problem Formulation}\label{sec:problem}

The remainder of this article considers the non-control-affine dynamics \cref{eq:system} with control \eqref{eq:u_follows_plan}, where $\theta$ and $\gamma$ evolve according to \cref{eq:theta_gamma_dyn}.
We address the problem of designing the planning control $\pi$ and feedback controls $\omega$ and $z$ to minimize an integral cost of the predicted flow $\phi$ over the horizon $T$ such that the predicted flow $\phi(\cdot ; \bar x(t))$ and actual state $x(t)$ are in a prescribed safe set and the control $u(t) \in \SU$. 
We assume that $\SU, \Omega, \SZ$ are convex, $0 \in \Omega$, and $[0,1] \subseteq \SZ$.

Let $h_\rms \colon \BBR^n \to \BBR$ be continuously differentiable, and define the \textit{safe set}
\begin{equation}\label{eq:safe_set}
\SC_\rms \triangleq \left\{ x \in \BBR^n \colon h_\rms(x) \geq 0 \right\},
\end{equation}
which is nonempty and contains no isolated points. 
The set $\SC_\rms$ is not assumed to be control forward invariant with respect to \cref{eq:system} where $u$ is admissible.
The cost $J \colon \BBR^{\bar n} \to \BBR$ is 
\begin{equation*}
J(\bar{x}) \!\triangleq\! W\!\left(\phi(\gamma\!+\!T;\bar{x})\right) \!+\!\! \int_{\gamma}^{\gamma+T} \!\!\!\! R(\phi(\tau;\bar{x}),\!\pi(\tau,\phi(\tau;\bar{x});\theta))\rmd\tau,
\end{equation*}
where $W \colon \BBR^n \to \BBR$ and $R \colon \BBR^n \times \BBR^m \to \BBR$ are continuously differentiable.
For each real time $t \ge 0$, $J(\bar{x}(t))$ is a finite-horizon cost of the predicted flow $\phi(\cdot;\bar{x}(t))$ over a prediction horizon of length $T$.
The control objective is as follows.

\begin{problem}\rm\label{prob:main}
Design the planning control $\pi$ and feedback controls $(\omega(t) , z(t) ) \in \Omega \times \SZ$ such that for all $t\ge 0$, the receding horizon cost $J(\bar{x}(t))$ is minimized subject to:
\begin{enumerate}[leftmargin=0.9cm]
\renewcommand{\labelenumi}{(C\arabic{enumi})}
\renewcommand{\theenumi}{(C\arabic{enumi})}
\item\label{con:C1} For all $(t,\tau) \!\in\! [0,\infty) \times [\gamma(t),\gamma(t)+T]$, $\phi(\tau; \bar{x}(t)) \!\in \!\SC_\rms$.

\item\label{con:C2} For all $t \ge 0$, $\theta(t) \in \Theta$ and $\gamma(t) \geq 0$.

\end{enumerate}
\end{problem}

Next, consider a continuously differentiable \textit{backup control} $u_\rmb \colon \BBR^n \to \SU$, and define the associated closed-loop dynamics $f_\rmb(x) \triangleq f(x, u_\rmb(x))$.
Let $h_\rmb \colon \BBR^n \to \BBR$ be continuously differentiable, and define the \textit{backup safe set} $\SC_\rmb \triangleq \left\{ x \in \BBR^n \colon h_\rmb(x) \geq 0 \right\}$,
where $\SC_\rmb \subset \SC_\rms$.
Since $\SC_\rms$ is not control forward invariant, we make the following assumption. 

\begin{assumption}\rm\label{ass:backup}
There exists an extended class-$\SK$ function $\alpha_\rmb$ such that for all $x \in \SC_\rmb$, $L_{f_\rmb} h_\rmb(x) + \alpha_\rmb(h_\rmb(x)) \geq 0$.
\end{assumption}

Assumption~\ref{ass:backup} implies that $\SC_\rmb$ is forward invariant with respect to \eqref{eq:system} with $u = u_\rmb$.
If $\SC_\rmb$ is compact, then Assumption~\ref{ass:backup} is equivalent to forward invariance of $\SC_\rmb$ under $u = u_\rmb$ \cite{ames2016control}.
Assumption~\ref{ass:backup} is standard in the backup-CBF literature \cite{backupautomatic} and analogous to the terminal constraint in model predictive control with recursive feasibility \cite{mayne2000}. 
We also make the following technical assumption. 

\begin{assumption}\rm\label{ass:terminal}
There exists an extended class-$\SK$ function $\alpha_\rms$ such that for all $x \in \SC_\rmb$, $L_{f_\rmb} h_\rms(x) + \alpha_\rms(h_\rms(x)) \geq 0$.
\end{assumption}

If $\SC_\rmb$ is compact and $\SC_\rmb \subset \operatorname{int}\SC_\rms$, then Assumption~\ref{ass:terminal} is satisfied.
Specifically, if $\SC_\rmb$ is compact, then $M \triangleq \inf_{x \in \SC_\rmb} L_{f_\rmb} h_\rms(x)$ exists, and if $\SC_\rmb \subset \operatorname{int}\SC_\rms$, then there exists $\varepsilon > 0$ such that $h_\rms(x) \ge \varepsilon$ on $\SC_\rmb$. 
In this case, Assumption~\ref{ass:terminal} is satisfied for any $\alpha_\rms$ such that $\alpha_\rms(\varepsilon) \ge -M$.

Although $u_\rmb$ makes $\SC_\rmb$ forward invariant, $\SC_\rmb$ can be small relative to $\SC_\rms$.
Thus, it is not desirable to keep the predicted flow $\phi(\cdot ;\bar x(t))$ in $\SC_\rmb$ because this can lead to poor performance (i.e., large $J$). 
Instead, we construct 2 candidate P-CBFs, where the intersection of their zero-superlevel sets is the set of states $\bar x$ such that the predicted flow $\phi(\cdot ;\bar x)$ is in the safe set $\SC_\rms$ for the entire prediction horizon, and $\phi(\cdot ;\bar x)$ is in the backup safe set $\SC_\rmb$ at the terminal prediction time.

First, consider the minimum-over-horizon candidate P-CBF
\begin{equation}
\psi_\rms(\bar{x}) \triangleq \min\limits_{\tau \in [\gamma, \gamma + T]} h_{\rms}(\phi(\tau ; \bar{x})), \label{eq:psi_min}
\end{equation}
which is nonnegative if and only if $\phi(\tau;\bar{x}) \in \SC_\rms$ for all $\tau \in [\gamma,\gamma+T]$ (i.e., the entire prediction horizon). 
Next, consider the terminal-time candidate P-CBF
\begin{equation}
\psi_\rmt(\bar{x}) \triangleq h_\rmb(\phi(\gamma + T ; \bar{x})), \label{eq:psi_b}
\end{equation}
which is nonnegative if and only if $\phi(\gamma+T;\bar{x}) \in \SC_\rmb$.
Define
\begin{equation*} 
\Psi \triangleq \{\bar{x} \in \BBR^{\bar n} \colon k(\theta) \ge 0, \, \gamma \geq 0, \, \psi_\rms(\bar{x}) \ge 0, \, \psi_\rmt(\bar{x}) \ge 0 \},
\end{equation*}
which is the intersection of zero-superlevel sets associated with the candidate P-CBF pair $(\psi_\rms, \psi_\rmt)$. 
In other words, $\Psi$ is the set of $\bar x$ such that for all $\tau \in [\gamma,\gamma+T]$, the predicted flow satisfies $\phi(\tau;\bar{x}) \in \SC_\rms$; the predicted flow satisfies the terminal condition $\phi(\gamma+T;\bar{x}) \in \SC_\rmb$; and $(\theta,\gamma) \in \Theta \times [0,\infty)$.
Thus, \ref{con:C1} and \ref{con:C2} are satisfied if $\bar{x}(t) \in \Psi$ for all $t \geq 0$. 
Hence, we address Problem~\ref{prob:main} by solving the following problem.

\begin{problem}\rm\label{prob:reframed}
Design $\pi$ such that $(\psi_\rms, \psi_\rmt)$ is a P-CBF pair for \eqref{eq:augmented} on $\Psi$. 
Then, for all $t \geq 0$, determine $(\omega(t) , z(t) ) \in \Omega \times \SZ$ that minimize $J(\bar{x}(t))$ subject to $\bar x(t) \in \Psi$.
\end{problem}

The next result provides the directional derivative of $\psi_\rms$.
It is a direct consequence of \cite[Cor.~4.4, Thm.~5.1]{gauvin2009differential}.
The result is used to design $\pi$ such that $(\psi_\rms, \psi_\rmt)$ is a P-CBF pair.

\begin{lemma}\rm\label{prop:psi_lipschitz}
$\psi_\rms$ given by \eqref{eq:psi_min} is locally Lipschitz and directionally differentiable on $\Psi$.
Furthermore, for all $\bar{x} \in \Psi$,
\begin{equation*} 
D_\nu\psi_\rms(\bar{x}) = \min_{\tau \in \ST(\bar{x})} h_\rms'(\phi(\tau; \bar{x})) \left[ \frac{\partial \phi}{\partial \bar{x}}(\tau; \bar{x}) + \frac{\partial \phi}{\partial \tau}(\tau; \bar{x}) \frac{\partial \gamma}{\partial \bar{x}} \right]\nu,
\end{equation*}
where
\begin{equation}\label{eq:ST}
\ST(\bar{x}) \triangleq \operatorname*{argmin}_{\tau \in [\gamma, \gamma+T]} h_\rms(\phi(\tau; \bar{x})).
\end{equation}
\end{lemma}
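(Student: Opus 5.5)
The plan is to recast $\psi_\rms$ as a marginal (optimal-value) function over a \emph{fixed} index set and then apply the Danskin/Gauvin directional-differentiability theorem for min-functions. The difficulty is that the minimization interval $[\gamma,\gamma+T]$ moves with $\bar x$. Substitute $\tau = \gamma + s$ with $s \in [0,T]$ and define
\begin{equation*}
G(\bar x, s) \triangleq h_\rms(\phi(\gamma+s;\bar x)),
\end{equation*}
so that $\psi_\rms(\bar x) = \min_{s\in[0,T]} G(\bar x,s)$. The index set $[0,T]$ is compact and independent of $\bar x$.

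The first step is to establish that $G$ is continuous on $\BBR^{\bar n}\times[0,T]$ and continuously differentiable in $\bar x$, with $\partial G/\partial \bar x$ jointly continuous. Since $f$ is $C^1$ and $\pi(t,\cdot\,;\cdot)$ is $C^1$ with $\pi$ jointly continuous, standard ODE dependence-on-parameters results give that $\phi(\tau;x,\theta,\gamma)$ is $C^1$ in $(x,\theta)$, with derivatives continuous in $(\tau,x,\theta,\gamma)$. Forward completeness under continuous controls guarantees existence on the whole window.

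The main obstacle is differentiability in $\gamma$, because $\pi$ is only continuous in $t$. My first attempt is to write $\phi(\gamma+s;x,\theta,\gamma)$ as the solution at time $s$ of $\dot y = f(y,\pi(\sigma+\gamma,y;\theta))$, $y(0)=x$, so that $\gamma$ enters the vector field as a time shift. Its derivative then involves $\partial\pi/\partial t$, which need not exist. If this fails, I would fall back on directional (Dini-type) differentiability in $\gamma$, which is all Gauvin's theorem needs, or accept the implicit regularity hypothesis that the paper's formula presupposes. The chain rule $\tfrac{\rmd}{\rmd\bar x}\phi(\gamma+s;\bar x) = \tfrac{\partial \phi}{\partial \bar x} + \tfrac{\partial\phi}{\partial\tau}\tfrac{\partial\gamma}{\partial\bar x}$ then yields exactly the bracket in the statement, with $\partial\phi/\partial\tau$ supplied by \eqref{eq:flow_velocity}.

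With $G$ continuous and $C^1$ in $\bar x$ uniformly over the compact set $[0,T]$, I then invoke \cite[Cor.~4.4, Thm.~5.1]{gauvin2009differential} (Danskin's theorem). For a minimum of a family of $C^1$ functions over a compact index set, the value function is locally Lipschitz, and its directional derivative in direction $\nu$ is $\min_{s\in S^\ast(\bar x)} G'_{\bar x}(\bar x,s)\nu$, where $S^\ast(\bar x)$ is the set of minimizers. Local Lipschitzness follows from local boundedness of $\partial G/\partial\bar x$ on compact neighborhoods times $[0,T]$. Finally, I map minimizers back via $\tau=\gamma+s$, so that $S^\ast(\bar x)$ corresponds to $\ST(\bar x)$ in \eqref{eq:ST}, which gives the stated formula on all of $\Psi$ (indeed on $\BBR^{\bar n}$). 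Restriction to $\Psi$ is harmless; the radial cone plays no role, since the limit exists for every $\nu$.
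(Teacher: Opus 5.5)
Your reparametrization $\tau=\gamma+s$ followed by Danskin's theorem is a legitimate alternative to the paper's route. The paper simply invokes Gauvin's marginal-function results, which allow the feasible set $[\gamma,\gamma+T]$ to depend on $\bar x$ through the constraints $\gamma-\tau\le 0$ and $\tau-\gamma-T\le 0$. There, the term $\frac{\partial\phi}{\partial\tau}\frac{\partial\gamma}{\partial\bar x}$ is the contribution of the endpoint multipliers, since stationarity gives $\lambda_1-\lambda_2=\frac{\rmd}{\rmd\tau}h_\rms(\phi(\tau;\bar x))$. In your version the same term comes from the chain rule, and you need no constraint qualification or multipliers. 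Both routes, however, rest on the same input: $(\tau,\bar x)\mapsto\phi(\tau;\bar x)$ must be jointly $C^1$, including in $\gamma$. That is exactly the step you call the main obstacle and then leave open (``if this fails, I would fall back on\ldots''). As written, the proposal therefore has a gap at its crux.

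The gap closes with a standard fact, and your suspicion that it might fail is misplaced. Putting $\gamma$ into the vector field as a time shift and invoking parameter dependence does call for $\partial\pi/\partial t$, but that is an artifact of the formulation. Treat $\gamma$ as the \emph{initial time} instead. The classical theorem on differentiable dependence on initial data (initial time included) applies; alternatively, use the flow identity $\phi(\tau;x,\theta,\gamma)=\phi(\tau;\phi(\gamma+h;x,\theta,\gamma),\theta,\gamma+h)$ together with $C^1$ dependence on $x$. Either way you get
\[
\frac{\partial\phi}{\partial\gamma}(\tau;\bar x)=-\frac{\partial\phi}{\partial x}(\tau;\bar x)\,f(x,\pi(\gamma,x;\theta)),
\]
with no differentiability of $\pi$ in $t$ required. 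This is the closed form of \eqref{eq:dphi_dgamma}, which the paper uses later. Hence, at $\tau=\gamma+s$,
\[
\frac{\partial G}{\partial\gamma}(\bar x,s)=h_\rms'(\phi(\tau;\bar x))\Bigl[\frac{\partial\phi}{\partial\tau}(\tau;\bar x)-\frac{\partial\phi}{\partial x}(\tau;\bar x)\,f(x,\pi(\gamma,x;\theta))\Bigr],
\]
where $\frac{\partial\phi}{\partial\tau}$ is given by \eqref{eq:flow_velocity}. This is jointly continuous in $(\bar x,s)$, so Danskin's theorem applies as you intend.

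Your proposed fallback would not have rescued the argument. Neither Danskin's nor Gauvin's theorem works with mere pointwise Dini differentiability of $G(\cdot,s)$ in $\gamma$ for each fixed $s$. The min-over-argmin formula, and local Lipschitzness via a bound on $\partial G/\partial\bar x$, both need a first-order expansion that is uniform in $s$, which is what joint continuity of $\partial G/\partial\bar x$ provides.

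A minor point: $\pi$ is defined only for $t\ge 0$, so $\psi_\rms$ is undefined for $\gamma<0$. Your claim that the formula holds on all of $\BBR^{\bar n}$, with the radial cone playing no role, therefore requires extending $\pi$ to $t<0$ (e.g., by $\pi(0,\cdot\,;\cdot)$). Such an extension does not affect the one-sided derivatives at $\gamma=0$ in directions with $\nu_\gamma\ge 0$.
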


\section{Safe Optimal Flow Control}\label{sec:sofc}

\subsection{Design of $\pi$ Such That $(\psi_\rms, \psi_\rmt)$ is a P-CBF Pair} 
\label{sec:design_pi_sub}

This subsection presents a construction for $\pi$ that smoothly transitions from a plan parameterized only by $\theta$ to the backup control $u_\rmb$ as the temporal argument of $\pi$ approaches $T$.

Consider the continuous \textit{parametric plan} $u_\rmp \colon [0,\infty) \times \BBR^d \to \BBR^m$, where for all $\tau \ge 0$, $u_\rmp(\tau;\cdot)$ is continuously differentiable on $\BBR^d$, and for all $(\tau,\theta) \in [0,\infty) \times \Theta$, $u_\rmp(\tau;\theta) \in \SU$.
Hence, $u_\rmp$ is admissible for all $\theta \in \Theta$.
One example is $u_\rmp=\kappa$, where $\kappa$ is given by Example~\ref{ex:foh}.

Let $\delta \in (0,T)$, and let $\xi \colon [0,\infty) \to [0,1]$ be continuously differentiable such that for all $\tau \in [0,T-\delta]$, $\xi(\tau) = 0$; for all $\tau \in [T,\infty)$, $\xi(\tau) = 1$; and $\xi$ is nondecreasing on $[T-\delta,T]$.
See \cite[Example~1]{safari2024TSCT} for an example of $\xi$.

Finally, let $\pi$ be given by 
\begin{equation}\label{eq:pi_def}
\pi(\tau, \phi; \theta) = \left [1 - \xi(\tau) \right ]  u_\rmp(\tau; \theta) + \xi(\tau) u_\rmb(\phi),
\end{equation}
and note that $\pi(\tau, \phi; \theta) = u_\rmp(\tau; \theta)$ for $\tau \in [0,T-\delta]$; $\pi(\tau, \phi; \theta) = u_\rmb(\phi)$ for $\tau \in [T,\infty)$; and for $\tau \in [T-\delta,T]$, $\pi$ transitions between $u_\rmp$ and $u_\rmb$, where $\delta \in (0,T)$ determines the transition rate. 
Since $\pi$ is a convex combination of $u_\rmp(\tau;\theta) \in \SU$ and $u_\rmb(\phi) \in \SU$, it follows that $\pi(\tau, \phi; \theta) \in \SU$.

To understand \eqref{eq:pi_def}, recall that the predicted flow \eqref{eq:flow_def} is evaluated under the planning control \eqref{eq:pi_def}, where $\tau \ge \gamma$. 
Thus, $\gamma$ determines how much of the prediction horizon $T$ is under $u_\rmp$ versus $u_\rmb$.
For small $\gamma$, the predicted flow $\phi$ is predominantly under $u_\rmp$. 
In contrast, for $\gamma > T-\delta$, the predicted flow $\phi$ is predominantly under $u_\rmb$.
Similarly, the executed control \eqref{eq:u_follows_plan} is $u = u_\rmp(\gamma; \theta)$ for $\gamma \le T-\delta$; and $u$ only transitions to $u_\rmb$ for $\gamma > T-\delta$.

For all $\bar{x} \in \Psi$, consider $K \colon \Psi \rightrightarrows \Omega \times \SZ$ defined by
\begin{align}\label{eq:K_constraint}
K(\bar{x}) \!\triangleq\! \{&\hat{v} \!\in\! \Omega \!\times\! \SZ \colon k^\prime(\theta)\hat{\omega} \!+\! \alpha_\theta(k(\theta)) \!\geq\! 0, \; \hat{z} \!+\! \alpha_\gamma(\gamma) \!\geq\! 0, \nn\\
&D_{\bar{f}(\bar{x}) + \bar{g}\hat{v}} \psi_\rms(\bar{x}) + \alpha_\rms(\psi_\rms(\bar{x})) \geq 0, \nn\\
&L_{\bar{f}} \psi_\rmt(\bar{x}) \!+\! L_{\bar{g}} \psi_\rmt(\bar{x}) \hat{v} \!+\! \alpha_\rmb(\psi_\rmt(\bar{x})) \!\geq\! 0\},
\end{align}
which is the constraint set for the candidate P-CBF pair $(\psi_\rms,\psi_\rmt)$. 
The next result shows $(\psi_\rms, \psi_\rmt)$ is a P-CBF pair.

\begin{theorem}\rm\label{thm:feasibility}
Let Assumptions~\ref{ass:backup} and~\ref{ass:terminal} be satisfied.
Then, the following statements hold:
\begin{enumerate}[leftmargin=0.7cm]
\item\label{thm:feasibility.a} $(\psi_\rms, \psi_\rmt)$ is a P-CBF pair for \eqref{eq:augmented} on $\Psi$.
\item\label{thm:feasibility.b} For all $\bar{x} \in \Psi$, $[0_{d \times 1} \;\; 1]^\top \in K(\bar{x})$.
\item\label{thm:feasibility.c} For all $\bar{x} \in \{\bar{x} \in \Psi \colon \gamma \geq T\}$, $0_{(d+1) \times 1} \in K(\bar{x})$.
\end{enumerate}
\end{theorem}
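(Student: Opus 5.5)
The plan is to prove \ref{thm:feasibility.b} and \ref{thm:feasibility.c} directly and then deduce \ref{thm:feasibility.a} from \ref{thm:feasibility.b}. For \ref{thm:feasibility.a}, Lemma~\ref{prop:psi_lipschitz} already gives that $\psi_\rms$ is locally Lipschitz and directionally differentiable on $\Psi$. Also, $\psi_\rmt$ is continuously differentiable, being a composition of $h_\rmb$ with the flow, which is $C^1$ in $(x,\theta,\gamma)$ since $f$, $u_\rmp$, $u_\rmb$, $\xi$ are $C^1$. So once $K(\bar x)$ is shown to be nonempty for every $\bar x\in\Psi$, Definition~\ref{def:pcbf} holds with $(\alpha_1,\alpha_2,\beta_\theta,\beta_\gamma)=(\alpha_\rms,\alpha_\rmb,\alpha_\theta,\alpha_\gamma)$. (I take $\alpha_\theta,\alpha_\gamma$ to be extended class-$\SK$ design choices.)

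The key tool is a semigroup (time-shift) identity for the predicted flow. Fix $\bar x=(x,\theta,\gamma)\in\Psi$ and a constant $\hat v$. Let $\bar x(s)$ be the solution of \eqref{eq:augmented} from $\bar x$ with $v\equiv\hat v$. Since the relevant functions are right-side differentiable along this trajectory, Lemma~\ref{lem:dini_directional} identifies $D_{\bar f(\bar x)+\bar g\hat v}\psi(\bar x)$ with $\frac{\rmd^+}{\rmd s}\psi(\bar x(s))|_{s=0}$. So in each case I only need a lower bound on $\psi(\bar x(s))$ for small $s>0$.

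\emph{Case \ref{thm:feasibility.b}}, $\hat v=[0_{d\times1}\;\;1]^\top$. The inequalities on $k$ and $\gamma$ are immediate: $k'(\theta)\cdot 0+\alpha_\theta(k(\theta))\ge0$ because $k(\theta)\ge0$, and $1+\alpha_\gamma(\gamma)\ge0$ because $\gamma\ge0$. Also $\hat v\in\Omega\times\SZ$ because $0\in\Omega$ and $1\in\SZ$. Along the trajectory, $\theta$ is constant, $\gamma(s)=\gamma+s$, and $\dot x=f(x,\pi(\gamma+s,x;\theta))$. This is exactly the predicted-flow ODE \eqref{eq:flow_velocity}, so $x(s)=\phi(\gamma+s;\bar x)$. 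By uniqueness of solutions,
\begin{equation*}
\phi(\tau;x(s),\theta,\gamma+s)=\phi(\tau;x,\theta,\gamma) \quad\text{for all } \tau\ge\gamma+s .
\end{equation*}
Hence the prediction is unchanged and the window simply slides to $[\gamma+s,\gamma+s+T]$. On the new part $[\gamma+T,\gamma+T+s]$ we have $\tau\ge T$, so $\xi(\tau)=1$ and the flow is generated by $f_\rmb$ starting from $\phi(\gamma+T;\bar x)\in\SC_\rmb$ (because $\psi_\rmt\ge0$).
\begin{itemize}
\item For $\psi_\rmt$: $\psi_\rmt(\bar x(s))=h_\rmb(\phi(\gamma+T+s;\bar x))$. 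Its derivative at $s=0$ is $L_{f_\rmb}h_\rmb(\phi(\gamma+T;\bar x))$, which is $\ge-\alpha_\rmb(\psi_\rmt(\bar x))$ by Assumption~\ref{ass:backup}.
\item For $\psi_\rms$: I will show
\begin{equation*}
\psi_\rms(\bar x(s))\ge\min\Big\{\psi_\rms(\bar x),\ \min_{\tau\in[\gamma+T,\gamma+T+s]}h_\rms(\phi(\tau;\bar x))\Big\}.
\end{equation*}
On the tail, Assumption~\ref{ass:backup} keeps the trajectory in $\SC_\rmb$, and Assumption~\ref{ass:terminal} gives $\frac{\rmd}{\rmd\tau}h_\rms\ge-\alpha_\rms(h_\rms)$. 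Since the initial value satisfies $h_\rms(\phi(\gamma+T))\ge\psi_\rms(\bar x)$, the comparison lemma \cite[Lemma~3.4]{khalil2002nonlinear} bounds the tail below by $\eta(\tau-\gamma-T)$, where $\dot\eta=-\alpha_\rms(\eta)$ and $\eta(0)=\psi_\rms(\bar x)$. The first term $\psi_\rms(\bar x)=\eta(0)$ is also $\ge \eta(s)$ for small $s$ when $\psi_\rms\ge0$, since $\eta$ is then nonincreasing. Therefore $\psi_\rms(\bar x(s))\ge\eta(s)$, and taking the right derivative gives $D\psi_\rms(\bar x)\ge-\alpha_\rms(\psi_\rms(\bar x))$.
\end{itemize}

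\emph{Case \ref{thm:feasibility.c}}, $\gamma\ge T$ and $\hat v=0$. Now $\theta$ and $\gamma$ are fixed, the $k$ and $\gamma$ inequalities are trivial, and $0\in\Omega\times\SZ$. Since $\xi\equiv1$ on $[\gamma,\infty)$, we have $\pi(\tau,\cdot;\theta)=u_\rmb$ there. So the state and the whole prediction evolve under the autonomous field $f_\rmb$, and the shift identity becomes $\phi(\tau;x(s),\theta,\gamma)=\phi(\tau+s;x,\theta,\gamma)$. The window again effectively slides forward by $s$ along a $u_\rmb$ trajectory. The same two arguments (Assumption~\ref{ass:backup} for $\psi_\rmt$, and Assumptions~\ref{ass:backup}--\ref{ass:terminal} with comparison for $\psi_\rms$) give the remaining two inequalities.

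The main obstacle I expect is the $\psi_\rms$ step. It needs the directional derivative of a min over a moving window, and I avoid the argmin formula of Lemma~\ref{prop:psi_lipschitz} by working with the sliding-window lower bound plus comparison, justified through Lemma~\ref{lem:dini_directional}. Care is needed there: the new tail segment must start from a value at least $\psi_\rms(\bar x)$, so that the comparison function with the monotone $\alpha_\rms$ dominates both pieces of the minimum. Existence and uniqueness of $\bar x(s)$ for small $s$ follow from forward completeness and the $C^1$ data.
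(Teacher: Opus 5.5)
Your proof is correct. The underlying mechanism is the paper's, but the route is genuinely different. The paper works infinitesimally:
\begin{enumerate}
\item it writes the constraint values via the argmin formula of Lemma~\ref{prop:psi_lipschitz};
\item it derives the sensitivity ODE $\frac{\rmd\eta}{\rmd\tau}=A\eta$ with $\eta(\gamma_\rme)=(1-z_\rme)f(x_\rme,\pi(\gamma_\rme,x_\rme;\theta_\rme))$;
\item it concludes $\eta\equiv0$ when $z_\rme=1$, and $\eta=f_\rmb(\phi)$ when $z_\rme=0$, $\gamma_\rme\ge T$, so the relevant direction is just $\partial\phi/\partial\tau$;
\item it case-splits on whether a minimizer $\tau_*\in\ST(\bar x_\rme)$ is interior (derivative zero), the left endpoint (derivative nonnegative), or the right endpoint (Assumption~\ref{ass:terminal}).
\end{enumerate}

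Your semigroup identity $\phi(\tau;x(s),\theta,\gamma+s)=\phi(\tau;x,\theta,\gamma)$ is the integrated form of $\eta\equiv0$. Your sliding-window bound plus comparison on the appended $u_\rmb$ tail replaces both the Danskin-type formula and the three-case analysis. Dropping $[\gamma,\gamma+s)$ can only raise the minimum, which covers the paper's interior and left-endpoint cases. The tail estimate is exactly the right-endpoint case.

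What your route buys is transparency without any sensitivity computation. It also gives a slightly stronger finite-time fact: the trajectory under the constant $\hat v$ actually stays in $\Psi$. What the paper's route buys is a purely pointwise check of the constraint exactly as written in $K(\bar x)$. It uses the same gradient expressions the QP needs, and does not rely on solutions over an interval, the comparison lemma, or Lemma~\ref{lem:dini_directional}.

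Two small technical points to tighten. First, invoking Lemma~\ref{lem:dini_directional} with $\SD=\Psi$ requires $\dot{\bar x}(0)\in R_\Psi(\bar x)$, a straight-line condition. You do not check it, and it can fail on the boundary of $\Psi$ even though the curve stays inside. Apply the lemma on a larger set such as $\{\bar x\colon\gamma\ge0\}$, where $\psi_\rms$ is still locally Lipschitz and directionally differentiable. Alternatively, use local Lipschitzness directly to get $\psi_\rms(\bar x(s))-\psi_\rms(\bar x+s\nu)=o(s)$; your directions have $\hat z\ge0$, so $\gamma$ stays nonnegative. Second, the comparison lemma formally needs $\alpha_\rms$ locally Lipschitz, a caveat shared with the paper's Theorem~\ref{thm:pcbf_invariance}. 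You can sidestep it, since you only need the right derivative at $s=0$. On the tail, $h_\rms(\phi(\tau;\bar x))\ge h_\rms(\phi(\gamma+T;\bar x))-s\,\alpha_\rms(h_\rms(\phi(\gamma+T;\bar x)))-o(s)$ by continuity. Splitting into the cases $h_\rms(\phi(\gamma+T;\bar x))>\psi_\rms(\bar x)$ and $=\psi_\rms(\bar x)$ then gives the bound.
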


\begin{proof}
Let $\bar{x}_\rme \!=\! [x_\rme^\top \; \theta_\rme^\top \; \gamma_\rme]^\top \!\in\! \Psi$, $\omega_\rme \!=\! 0_{d \times 1}$, $z_\rme \!\in\! \{0, 1\}$, and $v_\rme \!\triangleq\! [\omega_\rme^\top \;\; z_\rme]^\top$.
Next, define
$c_1 \!\triangleq\! L_{\bar{f}} \psi_\rmt(\bar{x}_\rme) \!+\! L_{\bar{g}} \psi_\rmt(\bar{x}_\rme) v_\rme \!+\! \alpha_\rmb(\psi_\rmt(\bar{x}_\rme))$ and
$c_2 \!\triangleq\! D_{\bar{f}(\bar{x}_\rme) + \bar{g} v_\rme} \psi_\rms(\bar{x}_\rme) \!+\! \alpha_\rms(\psi_\rms(\bar{x}_\rme))$, and Lemma~\ref{prop:psi_lipschitz} implies that
\begin{align}
c_1 &= h_\rmb'(\phi(\gamma_\rme + T;\bar{x}_\rme)) \, \zeta(\gamma_\rme + T) + \alpha_\rmb(\psi_\rmt(\bar{x}_\rme)), \label{eq:c1_eta}\\
c_2 &= \min_{\tau \in \ST(\bar{x}_\rme)} h_\rms'(\phi(\tau;\bar{x}_\rme)) \, \zeta(\tau) + \alpha_\rms(\psi_\rms(\bar{x}_\rme)), \label{eq:c2_eta}
\end{align}
where
\begin{equation}\label{eq:zeta_eta_def}
\zeta(\tau) \triangleq \eta(\tau) + z_\rme \frac{\partial \phi}{\partial \tau}(\tau;\bar{x}_\rme), 
\end{equation}
and $\eta(\tau) \triangleq \frac{\partial \phi}{\partial \bar{x}}(\tau;\bar{x}_\rme) [\bar{f}(\bar{x}_\rme) + \bar{g} v_\rme]$, and substituting 
\eqref{eq:augmented.2}, \eqref{eq:augmented.3} and $\omega_\rme = 0$ yields 
\begin{equation}\label{eq:eta_expanded}
\eta(\tau) = \frac{\partial \phi}{\partial x}(\tau;\bar{x}_\rme) f(x_\rme, \pi(\gamma_\rme, x_\rme;\theta_\rme)) + \frac{\partial \phi}{\partial \gamma}(\tau;\bar{x}_\rme) z_\rme.
\end{equation}
Differentiating \eqref{eq:flow_def} with respect to $x$ and $\gamma$ yields
\begin{align}
&\frac{\partial \phi}{\partial x}(\tau;\bar{x}) = I + \int_{\gamma}^{\tau} A(\sigma;\bar{x}) \frac{\partial \phi}{\partial x}(\sigma;\bar{x}) \, \rmd\sigma, \label{eq:dphi_dx}\\
&\frac{\partial \phi}{\partial \gamma}(\tau;\bar{x}) \!=\! \!\int_{\gamma}^{\tau} \!\! \!A(\sigma;\bar{x}) \frac{\partial \phi}{\partial \gamma}(\sigma;\bar{x}) \, \rmd\sigma - f(x, \pi(\gamma, x;\theta)), \label{eq:dphi_dgamma}
\end{align}
where
\begin{align}
&A(\tau;\bar{x}) \triangleq \frac{\partial f}{\partial x}(\phi(\tau;\bar{x}), \pi(\tau, \phi(\tau;\bar{x}); \theta)) \nn\\
&\quad + \frac{\partial f}{\partial u}(\phi(\tau;\bar{x}), \pi(\tau, \phi(\tau;\bar{x}); \theta)) \xi(\tau) \frac{\partial u_\rmb}{\partial x}(\phi(\tau;\bar{x})).\label{eq:A_def}
\end{align}
Differentiating \eqref{eq:eta_expanded} with respect to $\tau$ and substituting \cref{eq:dphi_dx,eq:dphi_dgamma} yields
\begin{equation}\label{eq:eta_ODE}
\frac{\rmd \eta}{\rmd \tau}(\tau) \!=\! A(\tau;\bar{x}_\rme) \eta(\tau), \,\,\,\, \eta(\gamma_\rme) \!=\! (1 - z_\rme) f(x_\rme, \pi(\gamma_\rme, x_\rme;\theta_\rme)).
\end{equation}

To prove \ref{thm:feasibility.b}, let $z_\rme = 1$. 
Thus, \eqref{eq:eta_ODE} implies that for all $\tau \ge \gamma_\rme$, $\eta(\tau) = 0$, which combined with \eqref{eq:zeta_eta_def} implies that 
\begin{equation}\label{eq:eta.2}
\zeta(\tau) = \frac{\partial \phi}{\partial \tau}(\tau;\bar{x}_\rme).    
\end{equation}

Since $\gamma_\rme + T \geq T$, it follows from \cref{eq:pi_def,eq:flow_velocity,eq:eta.2} that $\zeta(\gamma_\rme+T)= \frac{\partial \phi}{\partial \tau}(\gamma_\rme+T;\bar{x}_\rme) = f_\rmb(\phi(\gamma_\rme+T;\bar{x}_\rme))$. 
Thus, \eqref{eq:c1_eta} becomes $c_1 = L_{f_\rmb} h_\rmb(\phi(\gamma_\rme+T;\bar{x}_\rme)) + \alpha_\rmb(h_\rmb(\phi(\gamma_\rme+T;\bar{x}_\rme)))$.
Since, in addition, $\bar{x}_\rme \in \Psi$ implies that $\phi(\gamma_\rme+T;\bar{x}_\rme) \in \SC_\rmb$, it follows from Assumption~\ref{ass:backup} that $c_1 \geq 0$.

Next, using \eqref{eq:eta.2} in \eqref{eq:c2_eta} implies that
\begin{equation*}
c_2 = \min_{\tau \in \ST(\bar{x}_\rme)} \frac{\rmd}{\rmd \tau} h_\rms(\phi(\tau;\bar{x}_\rme)) + \alpha_\rms(\psi_\rms(\bar{x}_\rme)).
\end{equation*}
Let $\tau_* \in \ST(\bar{x}_\rme)$, and consider 3 cases: $\tau_* \in (\gamma_\rme, \gamma_\rme + T)$, $\tau_* = \gamma_\rme$, and $\tau_* = \gamma_\rme + T$.
First, consider $\tau_* \in (\gamma_\rme, \gamma_\rme + T)$, which implies that $\tau_*$ is an interior minimizer of $h_\rms(\phi(\tau;\bar{x}_\rme))$.
Thus, $\frac{\rmd}{\rmd \tau} h_\rms(\phi(\tau_*;\bar{x}_\rme)) = 0$, and since $\bar{x}_\rme \in \Psi$, it follows that $c_2 = \alpha_\rms(\psi_\rms(\bar{x}_\rme)) \geq 0$.
Next, consider $\tau_* = \gamma_\rme$, which implies that $\gamma_\rme$ is a minimizer of $h_\rms(\phi(\tau;\bar{x}_\rme))$ at the left endpoint of $[\gamma_\rme, \gamma_\rme+T]$.
Thus, $\frac{\rmd}{\rmd \tau} h_\rms(\phi(\gamma_\rme;\bar{x}_\rme)) \geq 0$, and it follows that $c_2 \geq \alpha_\rms(\psi_\rms(\bar{x}_\rme)) \geq 0$.
Next, consider $\tau_* = \gamma_\rme + T$.
Since $\gamma_\rme + T \geq T$, it follows from \cref{eq:pi_def,eq:flow_velocity} that $\frac{\rmd}{\rmd \tau} h_\rms(\phi(\gamma_\rme+T;\bar{x}_\rme)) = L_{f_\rmb} h_\rms(\phi(\gamma_\rme+T;\bar{x}_\rme))$ and $\psi_\rms(\bar{x}_\rme) = h_\rms(\phi(\gamma_\rme+T;\bar{x}_\rme))$.
Since, in addition, $\bar{x}_\rme \in \Psi$ implies $\phi(\gamma_\rme+T;\bar{x}_\rme) \in \SC_\rmb$, it follows from Assumption~\ref{ass:terminal} that $c_2 \geq 0$.
Since $\bar{x}_\rme \in \Psi$, it follows that $k(\theta_\rme) \geq 0$ and $\gamma_\rme \geq 0$, which implies that $\alpha_\theta(k(\theta_\rme)) \geq 0$ and $z_\rme + \alpha_\gamma(\gamma_\rme) \geq 0$.
Since, in addition, $c_1, c_2 \geq 0$, \cref{eq:K_constraint} implies that $v_\rme \in K(\bar{x}_\rme)$, which proves \ref{thm:feasibility.b}.

To prove \ref{thm:feasibility.c}, let $\bar{x}_\rme \!\in\! \{\bar{x} \!\in\! \Psi \colon \gamma \!\geq\! T\}$ and $z_\rme \!=\! 0$.
Since $\gamma_\rme \!\geq\! T$, \eqref{eq:pi_def} implies that for all $\tau \!\ge\! \gamma_\rme$, $\xi(\tau) \!=\! 1$ and $\pi(\tau, \phi(\tau;\bar{x}_\rme); \theta_\rme) \!=\! u_\rmb(\phi(\tau;\bar{x}_\rme))$.
Thus, \cref{eq:flow_velocity,eq:A_def} imply that $\frac{\partial \phi}{\partial \tau}(\tau;\bar{x}_\rme) \!=\! f_\rmb(\phi(\tau;\bar{x}_\rme))$ and $A(\tau;\bar{x}_\rme) \!=\! f_\rmb'(\phi(\tau;\bar{x}_\rme))$.
Since, in addition, $\frac{\rmd}{\rmd \tau} f_\rmb(\phi(\tau;\bar{x}_\rme)) \!=\! f_\rmb'(\phi(\tau;\bar{x}_\rme)) f_\rmb(\phi(\tau;\bar{x}_\rme))$ and $z_\rme \!=\! 0$, it follows that $\eta \!=\! f_\rmb(\phi(\cdot;\bar{x}_\rme))$ satisfies \eqref{eq:eta_ODE}.
Thus, for all $\tau \!\ge\! \gamma_\rme$, $\eta(\tau) \!=\! \frac{\partial \phi}{\partial \tau}(\tau;\bar{x}_\rme)$, which combined with \eqref{eq:zeta_eta_def} yields \eqref{eq:eta.2}.
The remainder of the proof of \ref{thm:feasibility.c} is identical to the last two paragraphs of the proof of~\ref{thm:feasibility.b}.

Finally, \ref{thm:feasibility.b} and Definition~\ref{def:pcbf} imply that $(\psi_\rms, \psi_\rmt)$ is a P-CBF pair, which confirms~\ref{thm:feasibility.a}.
\end{proof}

The following result shows that any control selected pointwise from $K(\bar{x})$ makes $\Psi$ forward invariant.
The result is an immediate consequence of Theorems~\ref{thm:pcbf_invariance} and~\ref{thm:feasibility}.

\begin{corollary}\rm\label{cor:forward_invariance}
Let Assumptions~\ref{ass:backup} and~\ref{ass:terminal} be satisfied, and let $v_{\rm fi} \colon \Psi \to \Omega \times \SZ$ be such that for all $\bar{x} \in \Psi$, $v_{\rm fi}(\bar{x}) \in K(\bar{x})$.
Let $\bar{x}_0 \in \Psi$, and assume there exists $t_\rmm \in (0, \infty]$ such that \eqref{eq:augmented} with $v = v_{\rm fi}$ has a unique solution on $[0, t_\rmm)$.
Then, for all $t \in [0, t_\rmm)$, $\bar{x}(t) \in \Psi$.
\end{corollary}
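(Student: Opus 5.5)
The plan is to obtain Corollary~\ref{cor:forward_invariance} directly from Theorem~\ref{thm:pcbf_invariance}, with $\ell = 2$, $(\psi_1,\psi_2) = (\psi_\rms,\psi_\rmt)$ and $\SH = \Psi$. The work is to check that the hypotheses of Theorem~\ref{thm:pcbf_invariance} hold under the present assumptions, and that the set $K(\bar x)$ in \eqref{eq:K_constraint} is the set $K_\SH(\bar x)$ in \eqref{eq:pfcbf_K} for a specific choice of class-$\SK$ functions.

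\emph{Step 1: regularity of the candidate P-CBFs on $\Psi$.}
\begin{itemize}
\item By Lemma~\ref{prop:psi_lipschitz}, $\psi_\rms$ is locally Lipschitz and directionally differentiable on $\Psi$.
\item For $\psi_\rmt = h_\rmb(\phi(\gamma+T;\bar x))$, $h_\rmb$ is $C^1$. The flow $\phi$ is $C^1$ in $(x,\theta,\gamma)$, by standard differentiable dependence of ODE solutions on initial data and parameters, because $f$, $u_\rmp(\tau;\cdot)$ and $u_\rmb$ are $C^1$ and $\xi$ is $C^1$.
\item Hence $\psi_\rmt$ is continuously differentiable, so it is locally Lipschitz, and its directional derivative is the Lie derivative. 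In particular $D_{\bar f + \bar g \hat v}\psi_\rmt = L_{\bar f}\psi_\rmt + L_{\bar g}\psi_\rmt \hat v$, which is the term appearing in \eqref{eq:K_constraint}.
\end{itemize}

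\emph{Step 2: match the two constraint sets.} Take $\beta_\theta = \alpha_\theta$, $\beta_\gamma = \alpha_\gamma$, $\alpha_1 = \alpha_\rms$ and $\alpha_2 = \alpha_\rmb$. With these choices, $K_\SH(\bar x) = K(\bar x)$ for every $\bar x \in \Psi$. Nonemptiness of $K(\bar x)$ is exactly Theorem~\ref{thm:feasibility}\ref{thm:feasibility.b}, which is also the content of Theorem~\ref{thm:feasibility}\ref{thm:feasibility.a}. So $(\psi_\rms,\psi_\rmt)$ is a P-CBF pair on $\Psi$, and any $v_{\rm fi}$ with $v_{\rm fi}(\bar x) \in K(\bar x)$ satisfies $v_{\rm fi}(\bar x) \in K_\SH(\bar x)$.

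\emph{Step 3: conclude.} Since $\bar x_0 \in \Psi = \SH$ and a unique solution exists on $[0,t_\rmm)$ by hypothesis, Theorem~\ref{thm:pcbf_invariance} gives $\bar x(t) \in \Psi$ for all $t \in [0,t_\rmm)$.

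\emph{Expected obstacle.} The only delicate point is bookkeeping. The class-$\SK$ functions must be extended class-$\SK$, matching Assumptions~\ref{ass:backup} and~\ref{ass:terminal}; I will take $\alpha_\theta$ and $\alpha_\gamma$ to be extended class-$\SK$ as implicitly intended in \eqref{eq:K_constraint}. In addition, the requirement of Lemma~\ref{lem:dini_directional} on radial-cone directions, used inside Theorem~\ref{thm:pcbf_invariance}, should be inherited unchanged, since that theorem is invoked as a black box.
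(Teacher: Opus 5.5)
Your proposal is correct and follows the paper's route: the paper also obtains the corollary directly by combining Theorem~\ref{thm:pcbf_invariance} (with $\SH=\Psi$) and Theorem~\ref{thm:feasibility}, where part~\ref{thm:feasibility.b} gives nonemptiness of $K(\bar x)$. Your identification $K_\SH=K$ via $\beta_\theta=\alpha_\theta$, $\beta_\gamma=\alpha_\gamma$, $\alpha_1=\alpha_\rms$, $\alpha_2=\alpha_\rmb$, together with your observation that $\psi_\rmt$ is $C^1$ so its directional derivative equals $L_{\bar f}\psi_\rmt+L_{\bar g}\psi_\rmt\hat v$, spells out bookkeeping that the paper leaves implicit.
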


\subsection{Safe Optimal Controls $\omega$ and $z$}
\label{sec:optimal_control}

Since $J$ can be nonlinear and nonconvex, Problem~\ref{prob:reframed} cannot generally be solved with a convex optimization. 
However, $\frac{\rmd J}{\rmd t}$ along the trajectories of \eqref{eq:augmented} is affine in the control $v$.  
To make $\frac{\rmd J}{\rmd t}$ small, consider the quadratic cost
\begin{align}\label{eq:qp_cost}
\SJ(\hat{v}; \bar{x}) &\triangleq \frac{\partial J}{\partial \theta} \hat{\omega} + \hat{\omega}^\top Q_\omega \hat{\omega} + q_z \hat{z}^2 + \lambda \hat{z},
\end{align}
where $Q_\omega \in \BBR^{d \times d}$ is positive definite, $q_z > 0$, and $\lambda \geq 0$.
The first term of \eqref{eq:qp_cost} makes $\frac{\rmd J}{\rmd t}$ small, while $\hat{\omega}^\top Q_\omega \hat{\omega}$ and $q_z \hat{z}^2$ provide regularization to make \eqref{eq:qp_cost} strictly convex.
The term $\lambda \hat{z}$ penalizes increasing $\gamma$, which encourages the parametric plan over the backup controller.
The next result shows $\SJ$ has a unique minimizer on $K(\bar{x})$.
See \cite[Prop. 5]{safari2026flowbarrier} for a proof.

\begin{proposition}\rm\label{prop:K_convex}
Let Assumptions~\ref{ass:backup} and~\ref{ass:terminal} be satisfied.
Then, for all $\bar{x} \in \Psi$, $K(\bar{x})$ is convex, and $\operatorname*{argmin}_{\hat{v} \in K(\bar{x})} \SJ(\hat{v}; \bar{x})$ exists and is unique.
\end{proposition}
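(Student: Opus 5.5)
The plan has three steps: show that $K(\bar{x})$ is an intersection of convex sets, that it is nonempty and closed, and that $\SJ(\cdot;\bar{x})$ is strictly convex and coercive. Existence and uniqueness of the minimizer then follow from standard results.

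\emph{Convexity.} Fix $\bar{x} \in \Psi$. The set $\Omega \times \SZ$ is convex by the standing assumptions of Section~\ref{sec:problem}. The constraints $k^\prime(\theta)\hat{\omega} + \alpha_\theta(k(\theta)) \ge 0$, $\hat{z} + \alpha_\gamma(\gamma) \ge 0$, and the $\psi_\rmt$ constraint are affine in $\hat{v}$, so each defines a closed half-space (or all of $\BBR^{d+1}$).

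The key step is the $\psi_\rms$ constraint. By Lemma~\ref{prop:psi_lipschitz}, the map $\nu \mapsto D_\nu \psi_\rms(\bar{x})$ is a pointwise minimum over $\tau \in \ST(\bar{x})$ of functions that are linear in $\nu$. Hence it is concave and positively homogeneous in $\nu$. Composing with the affine map $\hat{v} \mapsto \bar{f}(\bar{x}) + \bar{g}\hat{v}$ keeps it concave in $\hat{v}$. Therefore $\{\hat{v} \colon D_{\bar{f}(\bar{x})+\bar{g}\hat{v}}\psi_\rms(\bar{x}) + \alpha_\rms(\psi_\rms(\bar{x})) \ge 0\}$ is a superlevel set of a concave function, which is convex. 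It is also an intersection of the closed half-spaces indexed by $\tau \in \ST(\bar{x})$, so it is closed. Note that $\ST(\bar{x})$ is nonempty and compact, because it is the argmin of a continuous function on a compact interval. Since an intersection of convex sets is convex, $K(\bar{x})$ is convex.

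\emph{Nonemptiness.} Theorem~\ref{thm:feasibility}\ref{thm:feasibility.b} gives $[0_{d\times 1}\;\;1]^\top \in K(\bar{x})$. This step uses Assumptions~\ref{ass:backup} and~\ref{ass:terminal}.

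\emph{Existence and uniqueness.} Since $Q_\omega$ is positive definite and $q_z>0$, the function $\SJ(\cdot;\bar{x})$ is a strictly convex quadratic with positive definite Hessian $\operatorname{diag}(2Q_\omega, 2q_z)$, so it is coercive. Take $\hat{v}_0 = [0\;\;1]^\top$. The sublevel set $\{\hat{v} \in K(\bar{x}) \colon \SJ(\hat{v};\bar{x}) \le \SJ(\hat{v}_0;\bar{x})\}$ is bounded and nonempty. If $K(\bar{x})$ is closed, this sublevel set is compact, and Weierstrass gives existence. Uniqueness follows from strict convexity on the convex set $K(\bar{x})$: if two distinct minimizers existed, their midpoint would lie in $K(\bar{x})$ and have strictly smaller cost.

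\emph{Main obstacle: closedness.} The half-space constraints and the $\psi_\rms$ constraint are closed, as shown above. However, $\Omega$ and $\SZ$ are only assumed convex. I would handle this in one of two ways. The first is to note that the paper's setting implicitly takes $\Omega$ and $\SZ$ closed, as in \cite{safari2026flowbarrier}, and state that assumption. The second is to argue existence directly when the minimizer over the closure lands in $K(\bar{x})$. I would adopt the closedness interpretation and flag it explicitly.
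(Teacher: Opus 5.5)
Your proposal is correct and follows the same route the paper relies on (it defers to \cite[Prop.~5]{safari2026flowbarrier}): $K(\bar{x})$ is the intersection of the convex set $\Omega \times \SZ$ with closed half-spaces and with the superlevel set of a pointwise minimum, over $\ST(\bar{x})$, of affine functions of $\hat{v}$; it is nonempty by Theorem~\ref{thm:feasibility}\ref{thm:feasibility.b}; and $\SJ(\cdot;\bar{x})$ is a strictly convex, coercive quadratic. Your closedness flag is genuinely needed, because existence can fail without it: if $\Omega$ is an open ball, $\frac{\partial J}{\partial \theta}$ is large, and the barrier constraints are slack on the relevant bounded region, then the infimum of $\SJ$ over $K(\bar{x})$ is approached only at $\partial\Omega$ and is never attained, so assuming $\Omega \times \SZ$ closed (as in the polyhedral QP setting and the simulation) is the right fix, whereas your second alternative is not a general argument.
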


For all $\bar{x} \in \Psi$, define the safe optimal flow control
\begin{equation}\label{eq:v_star_def}
v_*(\bar x) \triangleq \operatorname*{argmin}_{\hat{v} \in K(\bar{x})} \SJ(\hat{v}; \bar{x}),
\end{equation}
where Proposition~\ref{prop:K_convex} implies that $v_*(\bar{x})$ exists and is unique.

The next corollary is the main result of this article.
The result shows that $v_*$ makes $\Psi$ forward invariant.

\begin{corollary}\rm\label{cor:main_safety}
Let Assumptions~\ref{ass:backup} and~\ref{ass:terminal} be satisfied.
Let $\bar{x}_0 \in \Psi$, and assume there exists $t_\rmm \in (0, \infty]$ such that \eqref{eq:augmented} with $v = v_*$ has a unique solution on $[0, t_\rmm)$.
Then, for all $t \in [0, t_\rmm)$, $\bar{x}(t) \in \Psi$.
Furthermore, if $\gamma_0 \!\in\! [0,T]$, then for all $t \!\in\! [0, t_\rmm)$, $\!\gamma(t) \!\in\! [0,T]$.
\end{corollary}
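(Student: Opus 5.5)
The first statement follows from Corollary~\ref{cor:forward_invariance}. By \eqref{eq:v_star_def} and Proposition~\ref{prop:K_convex}, $v_*(\bar{x})$ exists and lies in $K(\bar{x})$ for every $\bar{x} \in \Psi$. So $v_{\rm fi} = v_*$ satisfies the hypothesis of Corollary~\ref{cor:forward_invariance}, and $\bar{x}(t) \in \Psi$ for all $t \in [0,t_\rmm)$. In particular $\gamma(t) \geq 0$, so only the upper bound $\gamma(t) \le T$ remains.

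For the upper bound, I would prove the pointwise claim: for every $\bar{x} \in \Psi$ with $\gamma \geq T$, the last component $z_*$ of $v_*(\bar{x})$ satisfies $z_* \le 0$. The argument uses the structure of $\pi$ in \eqref{eq:pi_def} once $\gamma \ge T$. Then $\xi(\tau) = 1$ on the whole window $[\gamma,\gamma+T]$, so the predicted flow is the flow of the autonomous vector field $f_\rmb$ started from $x$.

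\begin{itemize}
\item \emph{Independence of $\theta$.} The flow $\phi$, and hence $\psi_\rms$, $\psi_\rmt$ and $J$, do not depend on $\theta$ near such $\bar{x}$. This gives $\partial J/\partial\theta = 0$, and $\omega$ enters $\psi_\rms$ and $\psi_\rmt$ with zero coefficient.
\item \emph{Independence of upward shifts.} Because $f_\rmb$ is autonomous, shifting $\gamma$ upward only translates the time window without changing the trajectory. So $\phi(\gamma+s;\bar{x})$ is independent of $\gamma$ for $\gamma \ge T$.
\end{itemize}

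For $\hat z \ge 0$ the direction $\bar f + \bar g \hat v$ keeps $\gamma \ge T$, so the directional derivatives of $\psi_\rms$ and $\psi_\rmt$ along $\bar f + \bar g \hat v$ do not depend on $\hat\omega$ or on $\hat z \ge 0$. Hence they equal their value at $\hat v = 0$, which satisfies the constraints by Theorem~\ref{thm:feasibility}\ref{thm:feasibility.c}. I would check this directly with \eqref{eq:c1_eta}--\eqref{eq:zeta_eta_def}: the computation in the proof of part~\ref{thm:feasibility.c} gives $\eta + z_\rme \partial\phi/\partial\tau = \partial\phi/\partial\tau$ for any $z_\rme$. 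The reason is that $\eta(\gamma) = (1-z_\rme) f_\rmb(x)$ propagates to $(1-z_\rme)\partial\phi/\partial\tau$.

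Now suppose $z_* > 0$ and set $\tilde v = (\omega_*, 0)$. The $k$-constraint is unchanged, and $0 + \alpha_\gamma(\gamma) \ge 0$ holds since $\gamma \ge 0$. The two $\psi$-constraints coincide with those at $v_*$, which are satisfied, so $\tilde v \in K(\bar{x})$. Its cost is
\[
\SJ(\tilde v;\bar{x}) = \omega_*^\top Q_\omega\omega_* < \omega_*^\top Q_\omega\omega_* + q_z z_*^2 + \lambda z_* = \SJ(v_*;\bar{x}),
\]
using $\partial J/\partial\theta = 0$, $q_z > 0$ and $\lambda \ge 0$. This contradicts optimality of $v_*$, so $z_* \le 0$.

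Finally, suppose $\gamma_0 \in [0,T]$ but $\gamma(t_1) > T$ for some $t_1 \in (0,t_\rmm)$. By continuity, let $t_0 = \sup\{t \le t_1 \colon \gamma(t) \le T\}$, so $\gamma(t_0) = T$ and $\gamma > T$ on $(t_0,t_1]$. On that interval $\dot\gamma = z_*(\bar{x}(t)) \le 0$ by the claim, since $\bar{x}(t) \in \Psi$. Hence $\gamma(t_1) \le \gamma(t_0) = T$, a contradiction.

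The main obstacle is making the invariance step rigorous at the boundary $\gamma = T$. There only right perturbations of $\gamma$ are guaranteed not to change the flow, and $\psi_\rms$ is only directionally differentiable, with minimizer set $\ST$ from Lemma~\ref{prop:psi_lipschitz}. I would handle this by working only with directions $\hat z \ge 0$. I would then use the explicit formulas \eqref{eq:c1_eta}--\eqref{eq:eta_ODE} rather than a general independence argument.
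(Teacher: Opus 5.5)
Your proof is correct and rests on the same idea as the paper's. Once $\gamma\ge T$, $\pi$ equals $u_\rmb$ on the whole window, so $\partial J/\partial\theta=0$, and a feasible competitor with zero $z$-component beats any $v_*$ with $z_*>0$. The differences are at two steps.

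At the pointwise step, the paper compares against $\hat v=0$. Its membership in $K(\bar x)$ is exactly Theorem~\ref{thm:feasibility}\ref{thm:feasibility.c}, so $\SJ(v_*;\bar x)\ge q_z z_*^2>0=\SJ(0;\bar x)$ follows at once. Your competitor $(\omega_*,0)$ forces you to show that the $\psi_\rms$- and $\psi_\rmt$-constraints do not depend on $\hat\omega$ or $\hat z$. Your verification is right: for $\gamma\ge T$, $\partial\phi/\partial\theta=0$ and $\partial\phi/\partial\gamma=-\partial\phi/\partial\tau$, so the variation along $\bar f+\bar g\hat v$ is $\partial\phi/\partial\tau$ for every $\hat v$. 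But it is unnecessary, and the \emph{main obstacle} you flag disappears if you compare with $0$ instead.

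At the final step, the paper only rules out a crossing instant with $\gamma(t_1)=T$ and $z(t_1)>0$. It leaves implicit why this forces $\gamma\le T$; an exit with $z=0$ at the crossing instant is not literally excluded. You instead prove $z_*\le 0$ on all of $\{\bar x\in\Psi\colon\gamma\ge T\}$. The paper's comparison gives this verbatim, since Theorem~\ref{thm:feasibility}\ref{thm:feasibility.c} and $\partial J/\partial\theta=0$ hold for every $\gamma\ge T$. You then close with the last-exit-time argument, which makes the invariance of $\{\gamma\le T\}$ airtight.

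So your route is longer than needed at the pointwise step but more complete at the end.
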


\begin{proof}
Corollary~\ref{cor:forward_invariance} implies that for all $t \in [0, t_\rmm)$, $\bar{x}(t) \in \Psi$.

To prove the last sentence, let $\gamma_0 \in [0,T]$, and assume for contradiction that there exists $t_1 \!\in\! [0, t_\rmm)$ such that $\gamma(t_1)\! = \! T$ and $z(t_1) \!>\!0$.
Since $\pi(\tau, \cdot; \theta) = u_\rmb$ for all $\tau \ge \gamma(t_1)$, it follows that $\frac{\partial J}{\partial \theta}(\bar x_1) = 0$, where $\bar x_1 \triangleq \bar x(t_1)$.
Since, in addition, $v_*(\bar x_1) = [ \omega(t_1)^\top \quad z(t_1) ]^\top$ and $z(t_1) >0$, it follows from \eqref{eq:qp_cost} that $\SJ(v_*(\bar x_1); \bar x_1) > 0 = \SJ(0;\bar x_1)$, which is a contradiction because $0 \in K(\bar{x}_1)$ from Theorem~\ref{thm:feasibility}. 
\end{proof}

Corollary~\ref{cor:main_safety} shows that the control $v_*$ guarantees that the predicted flow $\phi(\cdot;\bar{x}(t))$ and actual state $x(t)$ are in the safe set $\SC_\rms$ for all time $t \in [0,t_\rmm)$, while minimizing $\SJ$, which aims to decrease the cost $J$ along the trajectories of \eqref{eq:augmented}.
If $\gamma_0 \in [0,T]$, then $\gamma$ stays in $[0,T]$.
In addition, the cost \eqref{eq:qp_cost} incentivizes keeping $\gamma$ small to keep the majority of the prediction window under $u_\rmp$ rather than $u_\rmb$.

\section{QP Implementation}\label{sec:qp}

This section presents \textit{FlowBarrier}, which is the QP implementation of \eqref{eq:v_star_def} in the case where $\Omega$ and $\SZ$ are convex polyhedra.
Lemma~\ref{prop:psi_lipschitz} implies that the constraint on $\psi_\rms$ in \eqref{eq:K_constraint} is equivalent to a family of affine constraints.
However, $\ST(\bar{x})$ may contain infinitely many points, which makes \eqref{eq:v_star_def} a semi-infinite QP.
We use discretization to solve \eqref{eq:v_star_def} although other approaches can be used \cite{lopez2007semi}. 
Let $N$ be a positive integer, and define $\tau_i \!\triangleq\! \gamma + iT/N$ for $i \in \{0,1,\ldots,N\}$.
Define $\ST_\rme(\bar{x}) \triangleq \operatorname*{argmin}_{\hat\tau \in \{\tau_0, \ldots, \tau_N\}} h_\rms(\phi(\hat\tau; \bar{x}))$, which contains the discrete prediction times at which $h_\rms(\phi(\,\cdot\,;\bar{x}))$ attains its minimum.
Next, let $K_\rme(\bar{x})$ be given by \eqref{eq:K_constraint}, where $D_{\bar{f}(\bar{x}) + \bar{g}\hat{v}} \psi_\rms(\bar{x})$ is replaced by
\begin{equation*}
h_\rms'(\phi(\tau; \bar{x})) \!\left[ \frac{\partial \phi}{\partial \bar{x}}(\tau; \bar{x}) + \frac{\partial \phi}{\partial \tau}(\tau; \bar{x}) \frac{\partial \gamma}{\partial \bar{x}} \right]\! [\bar{f}(\bar{x}) + \bar{g}\hat{v}]
\end{equation*}
for all $\tau \in \ST_\rme(\bar{x})$.
Thus, $K_\rme(\bar{x})$ has a finite number of affine constraints, and \eqref{eq:v_star_def} is approximated by
\begin{equation}\label{eq:primal_qp}
v_{* \rme}(\bar{x}) \triangleq \operatorname*{argmin}_{\hat{v} \in K_\rme(\bar{x})} \; \SJ(\hat{v}; \bar{x}).
\end{equation}
This QP requires $\frac{\partial J}{\partial \theta}$, $\frac{\partial \psi_\rmt}{\partial \bar{x}}$, and $\frac{\partial h_\rms(\phi(\tau;\bar{x}))}{\partial \bar{x}}$, which can be computed efficiently using the adjoint approach \cite{chen2018neural,safari2026flowbarrier}.

In this article, $\SC_\rms$, $\SC_\rmb$, and $\Theta$ are each the zero-superlevel set of a single function.
It can be useful to define each set as the intersection of zero-superlevel sets of multiple functions.
The log-sum-exponential soft minimum \cite{backupautomatic,safari2024TSCT} can be used to compose multiple barrier functions into a single one.
Let $\rho \!>\! 0$, and consider $\operatorname{softmin}_\rho \colon \BBR^{n_{\rm sm}} \!\to\! \BBR$ defined by $\operatorname{softmin}_\rho(z_1,\!\ldots,\!z_{n_{\rm sm}}) \!\triangleq\! -(1/\rho) \ln \sum_{i=1}^{n_{\rm sm}} e^{-\rho z_i}$, which is a continuously differentiable lower bound on the minimum with worst-case conservativeness $(\ln n_{\rm sm})/\rho$.
To illustrate a composite construction of $h_\rms$, consider $n_\rms$ barrier functions $b_1,\!\ldots,\!b_{n_\rms}$.
Then, we let the safe set $\SC_\rms$ be \eqref{eq:safe_set}, where $h_\rms(x) \!=\! \operatorname{softmin}_{\rho_\rms}(b_1(x),\!\ldots,\!b_{n_\rms}(x))$.
A similar construction can be used for $h_\rmb$.
See \cite{rabiee2024closed,safari2024ACC,safari2025safe} for more information.

The soft minimum can also be used to construct $k$. 
For example, consider $u_\rmp=\kappa$, where $\kappa$ is given by Example~\ref{ex:foh}, and $\SU = \{u \in \BBR^m \colon a_j^\top u + d_j \geq 0, \, j = 1,\ldots,r\}$ is a convex polyhedron.
Since $u_\rmp$ is a convex combination of $\theta_1,\ldots,\theta_p$, the admissible parameter set $\Theta$ can be constructed with
\begin{align}\label{eq:k_construction}
k(\theta) = \operatorname{softmin}_{\rho_k}(& a_1^\top \theta_1 + d_1, \ldots, a_r^\top \theta_1 + d_r, \ldots, \nn\\
&a_1^\top \theta_p + d_1, \ldots, a_r^\top \theta_p + d_r),
\end{align}
which ensures $\Theta \subseteq \SU^p$ and $\Theta \to \SU^p$ as $\rho_k \to \infty$.

\section{Application to Autonomous Car}\label{sec:simulation}

Consider the car modeled by \eqref{eq:system}, where
\begin{equation*}
    f(x,u) = \begin{bmatrix}
     \nu\cos\vartheta \\
     \nu\sin\vartheta \\
     u_1 \\
     \nu \tan u_2
    \end{bmatrix},
    \,
    x = \begin{bmatrix}
    q_\rmx\\
    q_\rmy\\
    \nu\\
    \vartheta
    \end{bmatrix},
    \,
    u = \begin{bmatrix}
    u_1\\
    u_2
    \end{bmatrix},
\end{equation*}
where $q \triangleq [ \, q_\rmx \quad q_\rmy \, ]^\top$ is the position, $\nu$ is the speed, $\vartheta$ is the heading angle, $u_1$ is the acceleration, and $u_2$ is the steering angle.
Note that $f$ is not affine in $u$.
Let $\bar u_1 = 2$, $\bar u_2 = 1$, and $\SU = \{ u \in \BBR^2 \colon u_1 \in [-\bar u_1, \bar u_1], \, u_2 \in [-\bar u_2, \bar u_2] \}.$

Consider the map shown in Figure~\ref{fig:trajectory}, which has $46$ circular obstacles and a wall.
For $i \in \{1, \ldots, 46\}$, the area outside the $i$th obstacle is the zero-superlevel set of $b_i(x) = \|q - c_i\| - r_i$,
where $c_i \in \BBR^2$ and $r_i > 0$ are the center and radius.
The area inside the wall is the zero-superlevel set of $b_{47}(x) = 10 - (q_\rmx^{20} + q_\rmy^{20})^{1/20}$.
The speed bounds are $b_{48}(x) = 2 - \nu$ and $b_{49}(x) = \nu + 2$.
Let $\rho_\rms = 20$, $h_\rms(x) = \operatorname{softmin}_{\rho_\rms}(b_1(x), \ldots, b_{49}(x))$, and $h_\rmb(x) = h_\rms(x) - 0.5 {\nu^2}/{\bar u_1} - \epsilon_\rmb$, where $\epsilon_\rmb > 0$.

The objective is for the car to move to the desired state $x_\rmd \in \BBR^4$ without violating constraints. 
Thus, we consider the cost $J$, where $R(x,u) = W(x) = \|x - x_\rmd\|^2$.

The parametric plan $u_\rmp$ is given by Example~\ref{ex:foh} with degree-one B-splines and $\xi$ given by \cite[Example~1]{safari2024TSCT} with $r = 1$ and $\delta = 0.2T$. 
The backup control is $u_\rmb(x) = [-\bar u_1 \tanh(15\nu) \;\; 0]^\top$, which satisfies Assumption~\ref{ass:backup} with any extended class-$\SK$ function $\alpha_\rmb$. 
The set $\Theta$ is constructed using \eqref{eq:k_construction} with $\rho_k = 50$. 
Let $\Omega = \BBR^d$ and $\SZ = (-\infty, 1]$.

We implement FlowBarrier \cref{eq:theta_gamma_dyn,eq:primal_qp,eq:qp_cost,eq:pi_def,eq:u_follows_plan} with $\alpha_\rms(\psi_\rms) = 12\psi_\rms$, $\alpha_\rmb(\psi_\rmt) = 5\psi_\rmt$, $\alpha_\theta(k) = 20k$, $\alpha_\gamma(\gamma) = 0.1\gamma$, $Q_\omega = 30 I_d$, $q_z = 10^{-6}$, $\lambda = 10^3$, and $N = 80$.
We use 2 configurations: (a) $T = 4\,\rms$, $p = 80$, $100\,\text{Hz}$ update; and (b) $T = 6\,\rms$, $p = 150$, $100\,\text{Hz}$ update.

For comparison, we implement nonlinear model predictive control (NMPC) \cite{mayne2000}, which seeks to minimize the $N$-point sampling of the cost $J$ subject to $u \in \SU$ and $h_\rms(x) \geq 0$ at each step of the prediction horizon and the terminal constraint $h_\rmb(x(T)) \geq 0$. 
We use 2 configurations: (c) $T = 4\,\rms$, $N = 80$, $100\,\text{Hz}$ update; and (d) $T = 4\,\rms$, $N = 80$, $20\,\text{Hz}$ update.

To compare methods, we conduct $100$ different tasks---$10$ initial states uniformly distributed along the bottom of the map and $10$ desired states uniformly distributed along the top. 
Each simulation is run for $20\,\rms$.
Trajectories are categorized as \emph{reached} if $\|x - x_\rmd\| \leq 0.5$ within $20\,\rms$, or \emph{stuck} if they do not reach the goal.
Table~\ref{tab:results} summarizes the results, where $J_{\rm cum} \triangleq \int_0^{20} R(x(t),u(t)) \, \rmd t$.
Cases (a) and (c) compare FlowBarrier and NMPC with the same horizon $T=4$, discretization $N=p=80$, and update rate $100$ Hz. 
For this configuration, FlowBarrier and NMPC achieve no constraint violations with comparable success rates. 
The cost $J_{\rm cum}$ is slightly lower with NMPC, but the computation time is $30\times$ longer than with FlowBarrier. 
In fact, the NMPC computation time is too large for the 100~Hz real-time implementation on the processor used. 
In contrast, FlowBarrier is computationally efficient because it propagates the predicted flow using forward integration, computes gradients using the adjoint method, and solves a QP. 
All of this takes approximately 1~ms on the processor used. 
This low computation time allows for longer prediction horizons and/or a more complex parameterization of $u_\rmp$.
Case (b) shows that these changes can improve performance and success rate.
In fact, FlowBarrier (b) achieves the highest success rate and comparable $J_{\rm cum}$ and is still $10\times$ faster than NMPC. 
Case (d) uses a 20-Hz update, which is near-real-time for NMPC, and results in a slight degradation of performance. 
For higher dimensional systems, NMPC can become intractable for real time. 
In contrast, FlowBarrier scales linearly with state dimension. 
Figure~\ref{fig:trajectory} shows the trajectories for cases (b) and (d).

\begin{figure}[t!]
\centering
\includegraphics[width=\columnwidth]{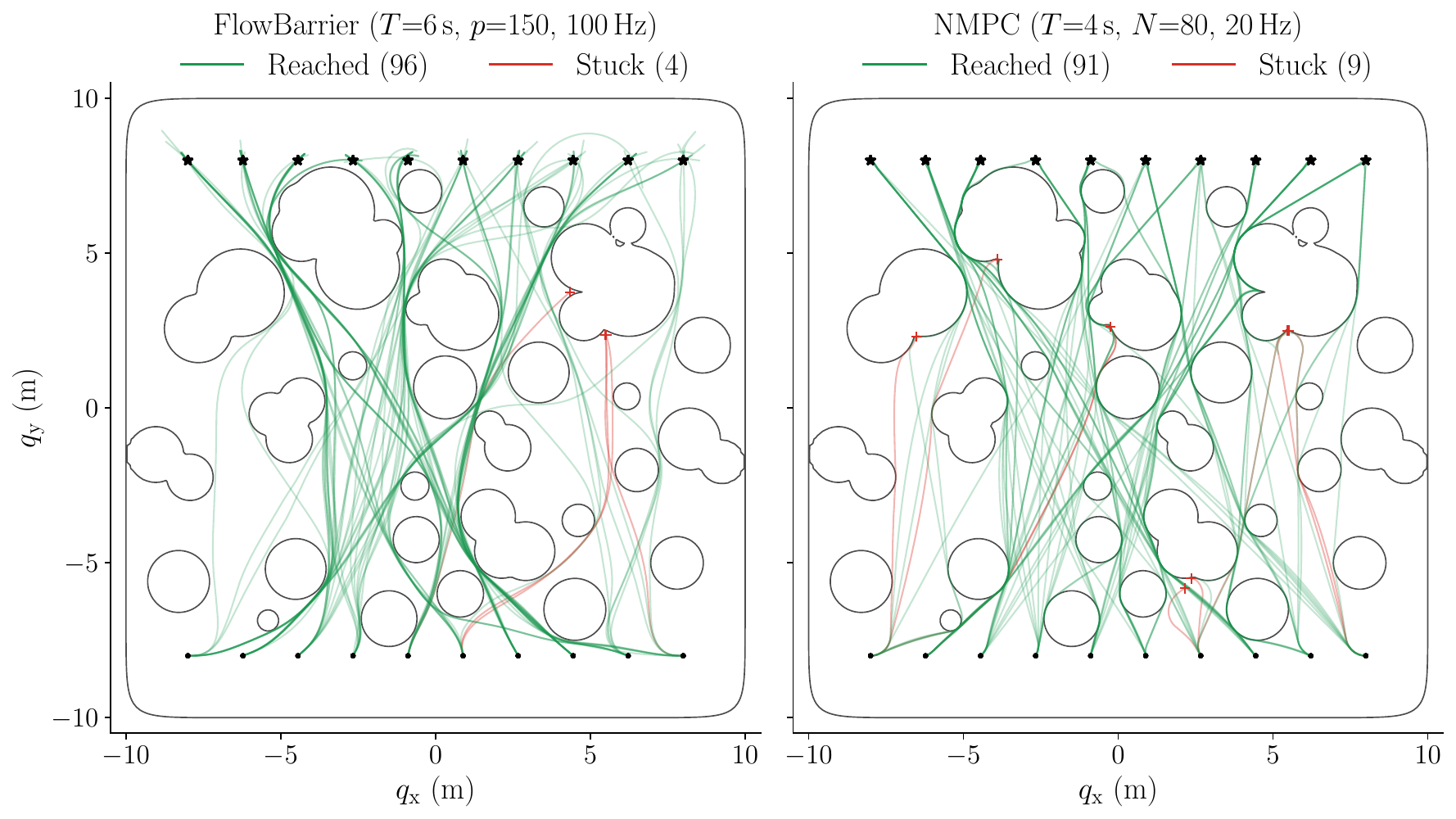}
\caption{Trajectories for \!(b)\! and \!(d), where reached is green and stuck is red.}
\label{fig:trajectory}

\captionof{table}{Comparison of FlowBarrier and NMPC across 100 trials.}
\label{tab:results}
\resizebox{\columnwidth}{!}{%
\scriptsize
\setlength{\tabcolsep}{4pt}
\begin{tabular}{@{}lcccc@{}}
\hline
Method & Success & Safety & $J_{\rm cum}$ & Comp.\ time (ms) \\
\hline
(a) FlowBarrier ($T\!=\!4$, $p\!=\!80$, 100\,Hz) & $89\%$ & $\mathbf{100\%}$ & $14.9 \pm 7.0$ & $\mathbf{[0.76,\, 1.10]}$ \\
(b) FlowBarrier ($T\!=\!6$, $p\!=\!150$, 100\,Hz) & $\mathbf{96\%}$ & $\mathbf{100\%}$ & $13.9 \pm 3.8$ & $[1.23,\, 2.11]$ \\
(c) NMPC ($T\!=\!4$, $N\!=\!80$, 100\,Hz) & $92\%$ & $\mathbf{100\%}$ & $\mathbf{12.2 \pm 5.3}$ & $[18.9,\, 57.0]$ \\
(d) NMPC ($T\!=\!4$, $N\!=\!80$, 20\,Hz) & $91\%$ & $\mathbf{100\%}$ & $12.5 \pm 5.4$ & $[18.8,\, 60.9]$ \\
\hline
\end{tabular}}
\end{figure}

\bibliographystyle{ieeetr}
\bibliography{SafeFlow}

@article{safari2026flowbarrier,
  title={Predicted-Flow Control Barrier Functions for Real-Time Safe Optimal Control},
  author={Safari, Amirsaeid and Hoagg, Jesse B},
  journal={arXiv:2606.00297},
  year={2026}
}

@article{safari2024TSCT,
  title={Time-Varying Soft-Maximum Barrier Functions for Safety in Unmapped and Dynamic Environments},
  author={Safari, Amirsaeid and Hoagg, Jesse B.},
  journal={IEEE Trans. Contr. Syst. Technol.},
  year={2025}
}

@inproceedings{safari2024ACC,
  title={Time-varying soft-maximum control barrier functions for safety in an a priori unknown environment},
  author={Safari, Amirsaeid and Hoagg, Jesse B},
  booktitle={Proc. Amer. Contr. Conf.},
  year={2024}
}

@article{ames2016control,
  title={Control barrier function based quadratic programs for safety critical systems},
  author={Ames, Aaron D and others},
  journal={IEEE Trans. Autom. Contr.},
  year={2017}
}

@article{backupautomatic,
  title={Soft-minimum and soft-maximum barrier functions for safety with actuation constraints},
  author={Rabiee, Pedram and Hoagg, Jesse B.},
  journal={Automatica},
  year={2025}
}

@article{rabiee2024closed,
  title={A Closed-Form Control for Safety Under Input Constraints Using a Composition of Control Barrier Functions},
  author={Rabiee, Pedram and Hoagg, Jesse B},
  journal={arXiv:2406.16874},
  year={2024}
}

@article{wieland2007,
  title={Constructive safety using control barrier functions},
  author={Wieland, Peter and Allg{\"o}wer, Frank},
  journal={IFAC Proc.},
  year={2007}
}

@inproceedings{wang2018,
  title={Permissive Barrier Certificates for Safe Stabilization Using Sum-of-Squares},
  author={Wang, Li and others},
  booktitle={Proc. Amer. Contr. Conf.},
  year={2018}
}

@article{gurriet2020,
  title={A Scalable Safety Critical Control Framework for Nonlinear Systems},
  author={Gurriet, Thomas and others},
  journal={IEEE Access},
  year={2020}
}

@book{khalil2002nonlinear,
  title={Nonlinear Systems},
  author={Khalil, Hassan K},
  edition={3rd},
  publisher={Prentice Hall},
  year={2002}
}

@inproceedings{gauvin2009differential,
  title={Differential properties of the marginal function in mathematical programming},
  author={Gauvin, Jacques and others},
  booktitle={Optim. Stab. Math. Program.},
  year={2009}
}

@article{chen2018neural,
  title={Neural ordinary differential equations},
  author={Chen, Ricky TQ and others},
  journal={Adv. Neural Inf. Process. Syst.},
  year={2018}
}

@article{blanchini1999,
  title={Set invariance in control},
  author={Blanchini, Franco},
  journal={Automatica},
  year={1999}
}

@article{mayne2000,
  title={Constrained model predictive control: Stability and optimality},
  author={Mayne, David Q and others},
  journal={Automatica},
  year={2000}
}

@inproceedings{choi2021hj,
  title={Robust control barrier--value functions for safety-critical control},
  author={Choi, Jason J. and others},
  booktitle={IEEE Conf. Dec. Contr.},
  year={2021}
}

@inproceedings{breeden2022,
  title={Predictive control barrier functions for online safety critical control},
  author={Breeden, Joseph and Panagou, Dimitra},
  booktitle={IEEE Conf. Dec. Contr.},
  year={2022}
}

@article{agrawal2025gatekeeper,
  title={The gatekeeper: Online safety verification and control for nonlinear systems in dynamic environments},
  author={Agrawal, Devansh R. and Panagou, Dimitra},
  journal={IEEE Trans. Robot.},
  year={2025}
}

@inproceedings{vahs2024,
  title={Forward invariance in trajectory spaces for safety-critical control},
  author={Vahs, Matti and others},
  booktitle={IEEE Int. Conf. Robot. Autom.},
  year={2025}
}

@inproceedings{grandia2021layered,
  title={Multi-layered safety for legged robots via control barrier functions and model predictive control},
  author={Grandia, Ruben and others},
  booktitle={IEEE Int. Conf. Robot. Autom.},
  year={2021}
}

@article{lopez2007semi,
  title={Semi-infinite programming},
  author={L{\'o}pez, Marco and others},
  journal={Eur. J. Oper. Res.},
  year={2007}
}

@inproceedings{singletary2022safe,
  title={Safe drone flight with time-varying backup controllers},
  author={Singletary, Andrew and others},
  booktitle={IEEE/RSJ Int. Conf. Intell. Robot. Syst.},
  year={2022}
}

@inproceedings{sforni2024receding,
  title={Receding horizon CBF-based multi-layer controllers for safe trajectory generation},
  author={Sforni, Lorenzo and others},
  booktitle={Proc. Amer. Contr. Conf.},
  year={2024}
}

@inproceedings{safari2025safe,
  title={Safe Navigation in Unmapped Environments for Robotic Systems with Input Constraints},
  author={Safari, Amirsaeid and Hoagg, Jesse B},
  booktitle={IEEE Conf. Dec. Contr.},
  year={2025}
}

\end{document}